\journal{arXiv}
\newtheorem{thm}{Theorem}[section]
\newtheorem{prop}[thm]{Proposition}
\newtheorem{cor}[thm]{Corollary}
\newtheorem{lem}[thm]{Lemma}
\newtheorem{defn}[thm]{Definition}
\DeclareMathOperator{\diam}{diam}
\DeclareMathOperator{\ndp}{ndp}
\DeclareMathOperator{\ddp}{dp}
\DeclareMathOperator{\DP}{DP}
\newcommand{\quot}[2]{#1/#2}
\begin{document}


\begin{frontmatter}

\title{Modular Decomposition of Graphs and the Distance Preserving Property}

\author[a,b]{Emad Zahedi}
\address[a]{Department of Mathematics, Michigan State University, \\
             East Lansing, MI 48824-1027, U.S.A.}
\address[b]{Department of Computer Science and Engineering,	Michigan State University\\
							East Lansing, MI 48824, U.S.A.}
\ead{Zahediem@msu.edu}

\author[c]{Jason P. Smith\fnref{fn1}}
\address[c]{Department of Computer and Information Sciences, University of Strathclyde, \\
             Glasgow, UK.}
\ead{jason.p.smith@strath.ac.uk}
\fntext[fn1]{J.P. Smith was supported by the EPSRC Grant EP/M027147/1.}

\begin{abstract}
Given a graph $G$, a subgraph $H$ is {\em isometric} if
$d_H(u,v) = d_G(u,v)$ for every pair $u,v\in V(H)$, where $d$ is the distance function. A graph $G$ is {\em distance preserving (dp)} if it has an isometric subgraph of every possible order. A graph is \emph{sequentially distance preserving (sdp)} if its vertices can be ordered such that deleting the first $i$ vertices results in an isometric subgraph, for all $i\ge1$. We introduce a generalisation of the lexicographic product of graphs, which can be used to non-trivially describe graphs. 
This generalisation is the inverse of the modular decomposition of graphs, which divides the graph into disjoint clusters called {\em modules}. Using these operations, we give a necessary and sufficient condition for graphs to be dp. Finally, we show that the Cartesian product of a dp graph and an sdp graph is dp.
\end{abstract}

\begin{keyword}
distance preserving, isometric, modular decomposition, lexicographic product, Cartesian product.
\end{keyword}

\end{frontmatter}

\section{Introduction}\label{section:I}
Many problems in graph theory can be tackled by decomposing a graph into smaller pieces and then studying the problem on these parts individually. There are many different ways to decompose a graph that have been applied to a variety of problems. In this paper we use modular decompositions of graphs to study the distance preserving property. Modular decomposition has been used to solve many problems, see~\cite{Gal67,Moh85,Moh84,Sid86}.

We call a subgraph {\em isometric} if the distance between any pair of vertices is the same as in the original graph. Distance properties and isometric subgraphs have been previously used in network clustering~\cite{NF2, nussbaum2013clustering}.
A graph is {\em distance preserving}, for which we use the abbreviation dp, if it has an isometric subgraph of every possible order. Distance preserving graphs have been studied in the literature, see \cite{esfahanian2014constructing,NF2,SmithZahedi16,zahedi2015distance}.

In~\cite{khalifeh2015distance} the distance preserving property is investigated  when taking products of graphs. Graph products are operations which take two graphs $G$ and $H$ and produce a graph with vertex set $V(G)\times V(H)$ and  certain conditions on the edge set, see~\cite{imrich2000product}. Two such products were considered in~\cite{khalifeh2015distance}, lexicographic product and Cartesian product. 
The purpose of this work is to generalise certain results from that article.
Various invariants of lexicographic products of graphs have been studied in the literature, see \cite{anand2012convex,vcivzek1994chromatic,yang2013connectivity}. The Cartesian product is a well-known graph product, in part because of Vizing's Conjecture~\cite{vizing1963cartesian}, and has been considered by many authors, such as~\cite{aurenhammer1992cartesian, caceres2007metric, khalifeh2009some, yousefi2008pi}.

The \emph{lexicographic product} $G[H]$ replaces every vertex of the graph $G$ with the graph $H$. We introduce the \emph{generalised lexicographic product} $G[\mathcal{H}]$ which replaces each vertex~$v$ of the graph $G$ with a graph $H_v\in\mathcal{H}$, where $\mathcal{H}$ is a set of graphs indexed by the vertices of~$G$. This can be viewed as a generalisation of the traditional lexicographic product  because setting~$H_v=H$, for all vertices $v$ of $G$ results in the lexicographic product~$G[H]$. Moreover, we see that any graph $M$ can be represented using the generalised lexicographic product, that is, $M$ is isomorphic to $G[\mathcal{H}]$ for some $G$ and $\mathcal{H}$.

The generalised lexicographic product has appeared in various forms in the literature. 
This operation is equivalent to applying a substitution, as first defined in \cite{Chv75}, to every vertex in the graph. One example of the implicit use of the generalised lexicographic product is Lov\'asz's proof of the perfect graph theorem \cite{Lov72} which uses the multiplication of vertices of a graph~$G$, which is equivalent to the generalised lexicographic product $G[\mathcal{H}]$ with~$H_v=\overline{K}_{h_v}$ for every vertex $v$ of $V(G)$, where $h_v\ge 1$ and $\overline{K}_{h_v}$ is the empty graph with~$h_v$ vertices.

A \emph{module} in a graph $M$ is an induced subgraph $H$ whose vertices share the same neighbourhood outside of $H$. 
A \emph{modular decomposition} of a graph $M$ is a collection of modules of $M$, where every vertex of $M$ appears in exactly one module. The neighborhood condition forces empty or complete bipartite graphs between modules. 
There are various polynomial time algorithms for computing the modular decomposition of a graph, see \cite{habib2010survey}. Given a modular decomposition $\mathcal{H}$ of $M$ we define the \emph{quotient graph} of $M$ with respect to~$\mathcal{H}$, denoted~$\quot{M}{\mathcal{H}}$, as the graph obtained by mapping each module of $\mathcal{H}$ to a single vertex, where there is an edge between two vertices of $\quot{M}{\mathcal{H}}$ if and only if there are edges between the vertices of the corresponding modules in $M$. The generalised lexicographic product can be consider as the inverse of the modular decomposition operation, thus $M$ is isomorphic to~$(\quot{M}{\mathcal{H}})[\mathcal{H}]$.

A \emph{split decomposition} of a graph is a modular decomposition into two modules both with order greater than $1$. 
Split decompositions have been used to study distance hereditary graphs, that is, graphs in which every induced subgraph is isometric. It was shown in \cite{bandelt1986distance} that the distance hereditary property is equivalent to a graph being totally decomposable using  split decompositions. See~\cite{GP08} for a definition of totally decomposable and a general overview of split decompositions.

In Section~\ref{graph_nature} we formally define the generalised lexicographic product and modular decomposition. We present a result that a quotient of a graph is minimal if and only if its corresponding modules are maximal, provided the quotient has at least three vertices. 
This strengthens some of the existing results in this area, see \cite{habib2010survey}. In Section~\ref{dp_graph} a necessary and sufficient condition is given for
graphs of the form $G[\mathcal{H}]$  to be dp. This condition implies that if $G$ is dp then $G[\mathcal{H}]$ is dp. Moreover, all isometric subgraphs of 
$G[\mathcal{H}]$ are characterized in this section.

In Section~\ref{section:F} we consider the Cartesian product of graphs. 
A graph $G$ is \emph{sequentially distance preserving}, which we abbreviate to sdp, if we can order the vertices $v_1,\ldots,v_n$ of~$G$ 
such that deleting the first $i$ vertices results in an isometric subgraph, for all $1\le i\le n$. 
In~\cite{khalifeh2015distance} it was shown that the Cartesian product $G\,\Box\,H$ of two graphs $G$ and $H$ is sdp if and only if $G$ and $H$ are sdp. Furthermore, it was conjectured that if $G$ and $H$ are dp then~$G\,\Box\,H$ is dp. We prove an intermediate result, namely that if $G$ is sdp and $H$ is dp, then~$G\,\Box\,H$ is dp.

\section{Generalised Lexicographic Product and Modular Decomposition}\label{graph_nature}
In this framework we assume all graphs are finite, nonempty, simple and connected, unless otherwise stated. We refer the reader to \cite{bondy1976graph} for a general overview of graph theory, which includes any definitions and notation not given in this paper. We let $|G|$ be the number of vertices of $G$ and denote the vertices and edges of a graph $G$ by $V(G)$ and $E(G)$, respectively. Given two graphs $G$ and $H$, let~$G-H$ be the graph induced by~$V(G)\setminus V(H)$.

In this section we introduce two graph operations, the generalised lexicographic product and modular decomposition. We note that these two operations are the inverses of each other. First we introduce the generalised lexicographic product. Recall that the {\em lexicographic product} $G[H]$ of graphs $G$ and $H$ is the graph with vertex set $V(G)\times V(H)$ and edge set
$$E(G[ H]) =\{(a,x)(b,y)\ |\text{ $ab\in E(G)$, or $xy\in E(H)$ and $a=b$}\}.$$	The reader can consult the book of Imrich and Klavzar~\cite{imrich2000product}, for more details about graph products.

\begin{defn}\label{facial}
Let $G$ be a graph and $\mathcal{H} = \{H_v\}_{v\in V(G)}$ be a set of graphs. 
Define the \emph{generalised lexicographic product} $G[\mathcal{H}]$ as the graph with vertex and edge sets
\begin{align*} 
V(G[\mathcal{H}]) = \bigcup_{v\in V(G)}\left( \{v\}\times V(H_v)\right),
\end{align*}
\begin{align*}
 E(G[\mathcal{H}])  = &\{ (u,x)(v,y)\ |\ uv\in E(G)\} \cup\bigcup_{v\in V(G)}\{(v,x)(v,y)\ |\ xy\in E(H_v)\}.
\end{align*}
\end{defn}

In other words $G[\mathcal{H}]$ is constructed by replacing every vertex $v\in V(G)$ with the graph~$H_v$, and the edges between $H_u$ and $H_v$ form a complete bipartite graph or the empty graph depending on whether $uv\in E(G)$ or $uv\not\in E(G)$, respectively. 
To clarify the notation Figure~\ref{fig:GFH} is given as an example. Note that if $H_v=H$, for all $v\in V(G)$, then $G[\mathcal{H}]$ is the lexicographic product graph $G[H]$.

\begin{figure}
    \centering
    \includegraphics{./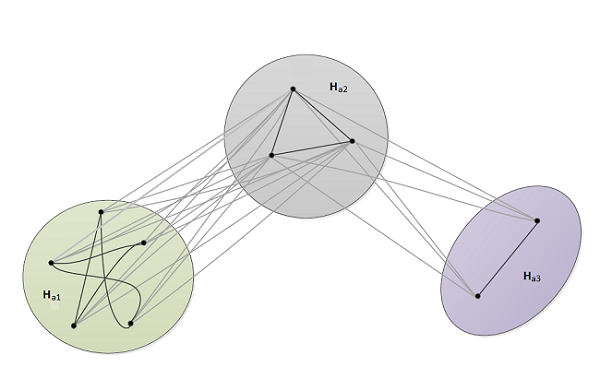}
		\caption{A graph $G[\{H_{a_1},H_{a_2}, H_{a_3}\}]$,
		where $G$ is the $2$-path $a_1a_2a_3$ and $H_{a_1}=C_5$, $H_{a_2}=K_3$ and $H_{a_3}=K_2$}  
\label{fig:GFH}
\end{figure}

The inverse of this operation has been well studied and is known as the modular decomposition of a graph, see \cite{habib2010survey} for an overview. The \emph{neighbourhood} of a vertex $v\in V(G)$, denoted~$N_G(v)$, is the set of all vertices in $G$ joined by an edge to $v$. Moreover, given a subgraph $A$ of $G$ let $N_G(A)=\cup_{v\in V(A)}N_G(v)\setminus V(A)$.

\begin{defn}
Let $H$ be a subgraph of a graph $M$. We call $H$ a \emph{module} of $M$ if~$N_M(u)\setminus V(H)=N_M(H)$, for all $u\in V(H)$. 
The module $H$ is \emph{maximal} if there is no module $H'$ of~$M$ such that $H\subsetneqq H'\subsetneqq M$. 
 A module of $M$ is \emph{trivial} if it is a single vertex or the whole graph. 
A \emph{modular partition} $\mathcal{H}$ of~$M$ is a set of disjoint modules of~$M$ such that $V(M)=\bigcup_{H\in\mathcal{H}}V(H)$. 
Two modules $H$ and $H'$ of a partition are said to be \emph{adjacent} if $(u,v)\in E(M)$ for every $(u,v)\in V(H)\times V(H')$.
A trivial or maximal decomposition of a graph is the modular decomposition where every module is trivial or maximal, respectively.
\end{defn}

For example, $H_{a_1}\cup H_{a_3}$ and $H_{a_2}$ are some of the modules in Figure~\ref{fig:GFH}. Moreover, deleting any vertex from $H_{a_2}$ gives a maximal module and the modules $H_{a_1}$ and $H_{a_2}$ are adjacent, but the modules $H_{a_1}$ and $H_{a_3}$ are not adjacent.

\begin{defn}
Let $M$ be a graph with a modular partition $\mathcal{H}$. The \emph{quotient graph} $\quot{M}{\mathcal{H}}$ is the graph with a single vertex $v_H$ for each $H\in\mathcal{H}$ and an edge between $v_{H}$ and~$v_{H'}$ if and only if $H$ and $H'$ are adjacent in $M$. 
We say that $\quot{M}{\mathcal{H}}$ is a \emph{minimal quotient graph} of $M$ if $\quot{M}{\mathcal{H}}$ contains no non-trivial modules.
\end{defn}

Note that the quotient operation and generalised lexicographic product are inverses of each other up to isomorphism, that is, $M\cong(\quot{M}{\mathcal{H}})[\mathcal{H}]$ and $G\cong\quot{(G[\mathcal{H}])}{\mathcal{H}}$, where $\cong$ denotes that two graphs are isomorphic.
We say that a graph $M$ can be \emph{represented} by a graph $G$ and set $\mathcal{H}$ if $M\cong G[\mathcal{H}]$. For example in Figure~\ref{fig:GFH}, the graph can be represented as~$K_2[\{H_{a_1}\cup H_{a_3}, H_{a_2}\}]$.

Next we present a useful lemma on the union of modules and then present the main result of this section. 

\begin{lem}\label{lem:union}
If $H$ and $K$ are both modules of $M$ with $V(H)\cap V(K)\not=\emptyset$, then $H\cup K$ is also a module of $M$.
\begin{proof}
Any vertex $b\in M- (H\cup K)$ is either a neighbour of all or none of $V(H)$. If $b$ is a neighbour of all of $V(H)$, then it is a neighbour of all of $V(H)\cap V(K)$, so it is a neighbour of all of $V(K)$. Similarly if $b$ is a neighbour of none of $V(H)$, then it is a neighbour of none of $V(K)$. Therefore, every element of $V(H)\cup V(K)$ has the same neighbours in~$M- (H\cup K)$.
\end{proof}
\end{lem}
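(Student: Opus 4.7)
The plan is to show directly from the definition that every vertex $b$ outside $V(H)\cup V(K)$ has a uniform adjacency relationship to all of $V(H)\cup V(K)$, i.e., $b$ is either adjacent to every vertex of $H\cup K$ or to no vertex of $H\cup K$. Once this is established, the neighbourhood condition $N_M(u)\setminus V(H\cup K)=N_M(H\cup K)$ follows automatically.

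First I would fix an arbitrary vertex $c\in V(H)\cap V(K)$, which exists by hypothesis. Then I would pick any $b\in V(M)\setminus V(H\cup K)$ and split into two cases based on whether $bc\in E(M)$. In the case $bc\in E(M)$: since $H$ is a module and $b\notin V(H)$, $b$ is adjacent to all of $V(H)$ or none of $V(H)$; because it is adjacent to $c\in V(H)$, it must be adjacent to all of $V(H)$. The same argument applied to $K$ shows $b$ is adjacent to all of $V(K)$, hence to all of $V(H)\cup V(K)$. In the case $bc\notin E(M)$: the same module property, applied to $H$ and $K$ respectively, forces $b$ to be non-adjacent to all of $V(H)$ and to all of $V(K)$, hence to all of $V(H)\cup V(K)$.

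Combining the two cases, every external vertex $b$ is uniformly adjacent or uniformly non-adjacent to $V(H\cup K)$, which is exactly the defining condition of a module. I do not anticipate a real obstacle here; the only subtle point is making sure the common vertex $c$ is used as the \emph{bridge} that synchronises the two independent module conditions for $H$ and $K$, which is precisely why the hypothesis $V(H)\cap V(K)\neq\emptyset$ cannot be dropped. The proof is essentially a two-line case distinction once this bridging idea is made explicit.
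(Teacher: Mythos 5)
Your proof is correct and follows essentially the same route as the paper's: both arguments fix an external vertex $b$ and use the nonempty intersection $V(H)\cap V(K)$ as the bridge that transfers the uniform adjacency of $b$ from $H$ to $K$; you merely make the bridging vertex $c$ explicit where the paper phrases the same step in terms of $V(H)\cap V(K)$ as a set. No gap.
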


\begin{thm}\label{MinMax}
Consider a graph $G$ with at least three vertices. The graph $G$ is a minimal quotient graph of 
$M=G[\mathcal{H}]$ if and only if $\mathcal{H}$ is a maximal modular decomposition.
\end{thm}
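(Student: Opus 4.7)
I will prove the two implications separately; the reverse implication is short while the forward implication requires a structural case analysis.

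For the reverse direction, I assume $\mathcal{H}$ is a maximal modular decomposition and suppose for contradiction that $G$ contains a non-trivial module $T$, so $2 \le |T| < |V(G)|$. I would verify directly that $H_T:=\bigcup_{u\in T}H_u$ is a module of $M$: for any $y\in H_w$ with $w\notin T$ and any $x\in H_u$ with $u\in T$ we have $xy\in E(M)$ iff $uw\in E(G)$, which is independent of the choice of $u\in T$ because $T$ is a module of $G$. Picking any $u\in T$ then gives $H_u\subsetneq H_T\subsetneq M$, contradicting the maximality of $H_u$ in $\mathcal{H}$.

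For the forward direction, I assume $G$ is a minimal quotient and suppose toward contradiction that some $H_v$ is not a maximal module of $M$, so there is a module $H'$ of $M$ with $H_v\subsetneq H'\subsetneq M$. I partition $V(G)=T\sqcup P\sqcup S$, where $T=\{u:H_u\subseteq H'\}$, $P=\{u:\emptyset\neq H'\cap H_u\subsetneq H_u\}$, and $S=\{u:H'\cap H_u=\emptyset\}$. Note that $v\in T$, and $H_v\subsetneq H'$ forces $T\cup P$ to contain at least one element besides $v$. By comparing the adjacencies of vertices in distinct $H_{u'}$'s to a common vertex $y$ lying outside $H'$, I would extract from the module axiom for $H'$ the following two facts: (i) every $u\in S$ has uniform adjacency in $G$ with $T\cup P$; and (ii) for every $u_0\in P$, the set $(T\cup P)\setminus\{u_0\}$ has uniform adjacency in $G$ with $u_0$.

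To conclude, I split into two cases. If $S\ne\emptyset$, fact (i) shows that $R:=T\cup P$ is a module of $G$; since $|R|\ge 2$ and $R\ne V(G)$, this is a non-trivial module, contradicting the minimality of $G$. If $S=\emptyset$ then $T\cup P=V(G)$, and $H'\ne M$ forces $P\ne\emptyset$; fact (ii) together with connectedness of $G$ forces every $u_0\in P$ to be a dominating vertex, which makes both $T$ and $P$ modules of $G$ (the external adjacency from each side is uniform because dominating vertices are adjacent to everyone). Since $|T|+|P|=|V(G)|\ge 3$ and both $T$ and $P$ are non-empty proper subsets of $V(G)$, at least one of them has size at least two, yielding a non-trivial module of $G$ and the desired contradiction. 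The main obstacle will be extracting facts (i)--(ii) cleanly when $H'$ splits several $H_u$'s simultaneously; the hypotheses $|V(G)|\ge 3$ and connectedness of $G$ are used precisely in this last step to handle the $S=\emptyset$ branch, and indeed the theorem can fail for $|V(G)|=2$ (e.g., $K_2[\{K_3,K_2\}]\cong K_5$, where the quotient $K_2$ is minimal but the decomposition is not maximal).
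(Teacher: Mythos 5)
Your proof is correct. The direction ``$\mathcal{H}$ maximal $\Rightarrow$ $G$ minimal'' is essentially the paper's argument: a non-trivial module $T$ of $G$ lifts to the module $\bigcup_{u\in T}H_u$ of $M$, which strictly sandwiches each $H_u$ with $u\in T$, and indeed this half needs neither connectedness nor $|V(G)|\ge 3$. The other direction is where you genuinely diverge. The paper first enlarges $H'$ to a \emph{maximal} module of $M$, applies Lemma~\ref{lem:union} to a block $H_u$ meeting $H'$ to conclude $H'\cup H_u=M$, and then reads off $V(G)\setminus\{u\}$ as a non-trivial module of $G$; that step silently assumes $H_u\not\subseteq H'$ (otherwise $H'\cup H_u=H'\neq M$), and the case where $H'$ is a union of whole blocks is left implicit. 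Your trichotomy $V(G)=T\sqcup P\sqcup S$ sidesteps both points: you never need $H'$ to be maximal, you never invoke the union lemma, and the configuration $P=\emptyset$, $S\neq\emptyset$ (where $H'$ swallows some blocks whole and misses the rest) is handled by fact (i). The price is a longer case analysis; in fact your $S=\emptyset$ branch can be shortened, since fact (ii) already exhibits $V(G)\setminus\{u_0\}$ as a module of $G$ of size $|V(G)|-1\ge 2$, so the detour through dominating vertices and the modules $T$ and $P$ is not needed. Both arguments localise the use of $|V(G)|\ge 3$ in the same branch, and your example $K_2[\{K_3,K_2\}]\cong K_5$ plays the same role as the paper's $K_4$ counterexample for two-vertex quotients.
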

\begin{proof}
By definition a graph $G$ is a non-minimal quotient graph of $M$ if and only if $G$ contains a non-trivial module $K$. First we consider the forward direction, if $G$ contains a non-trivial module $K$, then $H_v$ is a non-maximal module in $M$ for any $v\in K$. To see the backwards direction suppose $H_v$ is a non-maximal module in $M$, so there is a maximal module $H'\supset H_v$. Furthermore, there is some other module $H_u$ with $H'\cap H_u\not=\emptyset$, so $H'\cup H_u$ is also a module by Lemma~\ref{lem:union}. However, as $H'$ is a maximal module we must have $H'\cup H_u=M$. If $H_u$ is the only other modules in $\mathcal{H}$ then $G$ has only two vertices. If there are $k>2$ modules then $\mathcal{H}$ contains $k-1>1$ modules and as these modules are all contained in one larger module the corresponding vertices in $G$ must form a module in $G$, so $G$ is not minimal.
\end{proof}
In the proof of Theorem~\ref{MinMax} the requirement that $G$ has at least three vertices is only needed for the backwards direction, so we get the following corollary:
\begin{cor}
If $\mathcal{H}$ is a maximal modular decomposition of $M$, then $\quot{M}{\mathcal{H}}$ is a minimal quotient graph.
\end{cor}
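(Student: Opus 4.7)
The plan is to appeal directly to the forward direction of Theorem~\ref{MinMax}. The remark preceding this corollary already observes that the $|G|\ge 3$ hypothesis is invoked only in the backwards direction of the theorem (from minimal quotient to maximal decomposition); so nothing essentially new has to be shown, but I would spell out why the forward argument transports cleanly to the weaker hypothesis of the corollary.

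Concretely, I would argue the contrapositive. Suppose that $G=\quot{M}{\mathcal{H}}$ is not a minimal quotient graph, so $G$ contains a non-trivial module $K$. I claim that $\bigcup_{u\in V(K)}V(H_u)$ is a module of $M$. Indeed, every vertex outside this union lies in some $H_w$ with $w\notin V(K)$, and since $K$ is a module of $G$, $w$ is either adjacent in $G$ to every element of $V(K)$ or to none. Unpacking the definition of $G[\mathcal{H}]$, this means every vertex of $H_w$ is adjacent in $M$ either to all of $\bigcup_{u\in V(K)}V(H_u)$ or to none, which is exactly the module condition. Fixing any $v\in V(K)$, this module strictly contains $V(H_v)$ (since $|V(K)|\ge 2$) and is strictly contained in $V(M)$ (since $V(K)\subsetneq V(G)$), so $H_v$ is a non-maximal module of $M$ and therefore $\mathcal{H}$ is not a maximal modular decomposition.

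Since this argument makes no use of $|V(G)|$, the conclusion of the corollary follows. There is no real obstacle beyond carefully translating the module condition in $G$ across the generalised lexicographic product to obtain a module condition in $M$, which is purely a definition-chase; in the write-up I would likely condense it to a single sentence that cites Lemma~\ref{lem:union} (or the forward half of Theorem~\ref{MinMax}) rather than repeating the verification in full.
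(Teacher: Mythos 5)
Your proposal is correct and is essentially the paper's own argument: the paper derives this corollary by observing that the forward direction of Theorem~\ref{MinMax} (a non-trivial module $K$ of $G$ pulls back to the module $\bigcup_{u\in V(K)}V(H_u)$ of $M$, which witnesses the non-maximality of each $H_v$ with $v\in K$) never uses the hypothesis $|G|\ge 3$. You have simply written out the definition-chase that the paper leaves implicit; the details check out.
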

However it is necessary that the graph has at least three vertices for the forwards direction. To see this consider $K_4$ and the modular decomposition $\mathcal{H}$ partitioning $K_4$ into two modules of three vertices and one vertex. This is not a maximal modular partition but~$\quot{K_4}{\mathcal{H}}$ equals $K_2$ which is the minimal quotient graph of $K_4$.

Theorem~\ref{MinMax} is similar to Theorem 2 in \cite{habib2010survey}, but gives an equivalence statement rather than just a necessary condition. Moreover, the condition in Theorem 2 of \cite{habib2010survey} states that the graph $M=G[\mathcal{H}]$ must have a connected complement graph. The complement graph $\bar{M}$ of $M$ is the graph with the same vertices as $M$ and $xy$ is an edge in $\bar{M}$ if and only if $xy$ is not an edge in $M$. In fact the condition that $M=G[\mathcal{H}]$ must have a connected complement is equivalent to our condition that $|G|\ge3$, as can be seen by the following result:

\begin{lem}
The graph $M$ has a disconnected complement graph if and only if $K_2$ is a quotient graph of $M$.
\begin{proof}
The graph $\bar{M}$ is disconnected with components $A$ and $B$ if and only if there is no edge in $\bar{M}$ between any vertices $a\in V(A)$ and $b\in V(B)$, which is equivalent to the graphs induced by $V(A)$ and $V(B)$ in $M$ being a modular decomposition $\mathcal{H}$ with $\quot{M}{\mathcal{H}}=K_2$.
\end{proof}
\end{lem}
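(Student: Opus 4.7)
My plan is to prove both implications directly by unwinding the definitions. The key translation is that ``there are no edges between two vertex sets $A$ and $B$ in $\bar{M}$'' is equivalent to ``every vertex of $A$ is adjacent in $M$ to every vertex of $B$,'' which in turn is exactly the condition that $\{A,B\}$ forms a modular partition of $M$ with quotient $K_2$.

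For the forward direction, I would assume $\bar{M}$ is disconnected. Let $A$ be the vertex set of one connected component of $\bar{M}$ and set $B = V(M)\setminus A$; by definition of a component, there are no edges in $\bar{M}$ between $A$ and $B$, so in $M$ every vertex of $A$ is joined to every vertex of $B$. I would then check that $A$ and $B$ are modules of $M$: for any $a\in A$, $N_M(a)\setminus A = B = N_M(A)$, and the same set is obtained for every choice of $a\in A$, so $A$ is a module, and by symmetry so is $B$. Since every pair in $V(A)\times V(B)$ is an edge of $M$, the modules $A$ and $B$ are adjacent in the sense of the paper, whence $\quot{M}{\{A,B\}} = K_2$.

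For the backward direction, I would start from a modular partition $\mathcal{H}=\{A,B\}$ with $\quot{M}{\mathcal{H}} = K_2$. The quotient having the single edge means $A$ and $B$ are adjacent modules, so by the definition of adjacency the bipartite graph between them in $M$ is complete. Complementing, there are no edges between $V(A)$ and $V(B)$ in $\bar{M}$, and since both sets are nonempty and disjoint, $\bar{M}$ cannot be connected.

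There is no substantial obstacle here; the argument is essentially a chain of definition chases, and no results beyond the definitions given in the paper are required. The one minor subtlety, in the forward direction, is that $\bar{M}$ might have more than two components, but this is harmless: I can pick any one component as $A$ and let $B$ be the union of the remaining components, since the union of any subcollection of components of $\bar{M}$ still has no edges in $\bar{M}$ to its complement.
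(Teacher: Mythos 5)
Your proof is correct and follows essentially the same definition-chasing route as the paper's, which equates ``no edges between $A$ and $B$ in $\bar{M}$'' with ``$\{A,B\}$ is a modular partition with quotient $K_2$.'' Your only addition is the explicit verification that $A$ and $B$ are modules and the remark handling the case of more than two components of $\bar{M}$, a point the paper's one-line proof glosses over.
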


We can also determine when a graph has a unique maximal modular decomposition and unique minimal quotient graph.

\begin{lem}\label{lem:uniqueMax}
If $K_2$ is not a quotient graph of $M$, then $M$ has a unique maximal modular decomposition.
\begin{proof}
Suppose that $M$ has two different maximal modular decompositions $\mathcal{H}$ and $\mathcal{H}'$. There must exist a set $A$ that is the nonempty intersection of a pair $H\in\mathcal{H}$ and $H'\in\mathcal{H}'$ with $H\not=H'$. As both $H$ and $H'$ are modules $H\cup H'$ is also a module by Lemma~\ref{lem:union}. So the only way that $H$ and $H'$ are maximal is if $H\cup H'=M$. However $M- H$ is also a module. To see this first note that every element of $H'$ has the same neighbours in $H$, because there is at least one element $h\in H$ with $h\not\in H'$, so $h$ is either a neighbour of all or none of $H'$. Moreover, since every element of $H$ has the same neighbours either all elements of $H$ are neighbours of all elements of $H'$ or none are. Furthermore, $M-H\subseteq H'$ so every element of $M-H$ has the same neighbours in $H$, thus $M-H$ is a module. Therefore, $H$ and $M- H$ form a modular decomposition of $M$, so $K_2$ is a quotient graph of $M$.
\end{proof}
\end{lem}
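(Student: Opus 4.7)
My plan is to argue by contradiction: suppose $M$ admits two distinct maximal modular decompositions $\mathcal{H}$ and $\mathcal{H}'$. Since these are two different partitions of $V(M)$, one can select $H\in\mathcal{H}$ and $H'\in\mathcal{H}'$ with $H\neq H'$ and $V(H)\cap V(H')\neq\emptyset$. By Lemma~\ref{lem:union}, $H\cup H'$ is again a module of $M$. A short case analysis using the maximality of both $H$ and $H'$ (to rule out $H'\subsetneq H$ or $H\subsetneq H'$, each of which would exhibit a proper supermodule of the smaller one) shows that each of $H,H'$ is \emph{strictly} contained in $H\cup H'$, and their maximality then forces $V(H)\cup V(H')=V(M)$.

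The crux of the argument is to prove that $M-H$ is itself a module of $M$. Since $V(H)\cup V(H')=V(M)$, we have $V(M-H)\subseteq V(H')$; and since $H'$ is a maximal (hence proper) module of $M$, the set $V(H)\setminus V(H')$ is nonempty. The module property of $H'$ says that every two vertices in $V(H')$---in particular every two in $V(M-H)$---have the same neighbours outside $V(H')$, and hence in particular the same neighbours in the nonempty set $V(H)\setminus V(H')$. The module property of $H$ says that every vertex outside $V(H)$ is adjacent either to all of $V(H)$ or to none of it. Combining these two observations forces the ``all or none'' choice to be uniform across $V(M-H)$, which is precisely the condition for $M-H$ to be a module.

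Finally, $\{H,M-H\}$ is a two-part modular partition of $M$. Since $M$ is connected, the two parts must be adjacent (otherwise $M$ splits as a disjoint union with no edges across, contradicting connectedness), so $\quot{M}{\{H,M-H\}}\cong K_2$, contradicting the hypothesis that $K_2$ is not a quotient graph of $M$.

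The main obstacle, as I see it, is the middle step: transferring information between the two module structures to conclude that $M-H$ is a module. The essential trick is that the nonempty set $V(H)\setminus V(H')$ lets the module property of $H'$ pin down a common neighbourhood behaviour on $V(H)\setminus V(H')$, which the module property of $H$ then amplifies into uniform behaviour on all of $V(H)$; verifying the nonemptiness---which ultimately rests on the maximality of $H'$---is the small point that has to be watched.
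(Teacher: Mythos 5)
Your argument is correct and is essentially the paper's own proof: the same use of Lemma~\ref{lem:union} to force $V(H)\cup V(H')=V(M)$, the same key step showing $M-H$ is a module by combining the module property of $H'$ on the nonempty set $V(H)\setminus V(H')$ with the all-or-none property of $H$, and the same conclusion that $\{H,M-H\}$ yields the quotient $K_2$. The only (minor, welcome) addition is your explicit appeal to connectedness to see that the two-block quotient is $K_2$ rather than its complement.
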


\begin{cor}
Every graph $M$ has a unique minimal quotient graph.
\begin{proof}
If $K_2$ is a quotient graph of $M$, then $K_2$ is the unique minimal quotient graph. Otherwise, $M$ has a unique maximal modular decomposition $\mathcal{H}$ by Lemma~\ref{lem:uniqueMax}, so $\quot{M}{\mathcal{H}}$ is the unique minimal quotient graph.
\end{proof}
\end{cor}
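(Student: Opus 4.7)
The plan is to proceed by a case analysis based on whether $K_2$ occurs as a quotient graph of $M$, mirroring the hypothesis of Lemma~\ref{lem:uniqueMax}. In each case I must establish both existence and uniqueness of the minimal quotient.

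In the first case, suppose $K_2$ is a quotient graph of $M$. Then I would observe that $K_2$ itself has only trivial modules (singletons and the whole graph), so $K_2$ is immediately a minimal quotient graph of $M$. For uniqueness, I would rule out any other minimal quotient $G$ with $|V(G)| \geq 3$ by invoking Theorem~\ref{MinMax}: such a $G$ would correspond to a maximal modular decomposition $\mathcal{H}'$ with $|\mathcal{H}'| \geq 3$. Using the partition of $M$ into the two modules $H_1,H_2$ giving the $K_2$-quotient together with Lemma~\ref{lem:union}, one argues that $\mathcal{H}'$ must refine $\{H_1,H_2\}$; grouping the modules of $\mathcal{H}'$ by which $H_i$ contains them then yields a non-trivial module of $G$, contradicting minimality.

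In the second case, if $K_2$ is not a quotient graph of $M$, then Lemma~\ref{lem:uniqueMax} provides a unique maximal modular decomposition $\mathcal{H}$ of $M$. The corollary immediately following Theorem~\ref{MinMax} then guarantees that $\quot{M}{\mathcal{H}}$ is a minimal quotient graph. For uniqueness, note that any other minimal quotient graph $G'$ must satisfy $|V(G')| \geq 3$ (since $K_2$ is not a quotient graph of $M$), so by Theorem~\ref{MinMax}, $G'$ arises from a maximal modular decomposition of $M$, which must coincide with $\mathcal{H}$ by Lemma~\ref{lem:uniqueMax}. Hence $G' \cong \quot{M}{\mathcal{H}}$.

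The main obstacle I anticipate is the uniqueness argument in the first case, since Lemma~\ref{lem:uniqueMax} does not apply when $K_2$ is a quotient. One needs a direct combinatorial argument showing that every decomposition of $M$ into three or more modules must be refinable through $\{H_1,H_2\}$ in a way that produces non-trivial module structure in the quotient; the second case, by contrast, is essentially bookkeeping once Lemma~\ref{lem:uniqueMax} and Theorem~\ref{MinMax} are in hand.
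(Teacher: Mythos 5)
Your proof follows the same two-case structure as the paper's, splitting on whether $K_2$ is a quotient graph of $M$ and invoking Lemma~\ref{lem:uniqueMax}, Theorem~\ref{MinMax} and its corollary in the second case exactly as the paper does. The difference is that the paper simply \emph{asserts} that $K_2$ is the unique minimal quotient in the first case, whereas you correctly identify this as the step that needs an argument and sketch one. One caution about that sketch: the claim that a maximal modular decomposition $\mathcal{H}'$ with at least three parts must refine $\{H_1,H_2\}$ is not automatic. By Lemma~\ref{lem:union} and maximality, a part $H'$ meeting both sides satisfies either $H_1\subseteq H'$ or $H'\cup H_1=M$ (and likewise for $H_2$), which leaves open the configuration where $H'$ contains all of $H_1$ together with a proper nonempty piece of $H_2$; so the decomposition need not refine $\{H_1,H_2\}$. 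The argument still closes in that branch: since $H_1$ and $H_2$ are completely joined and $H'$ is a module, every vertex of $H'$ is adjacent to all of $M-H'$, so $M-H'$ is a module of $M$ that is a union of at least two parts of $\mathcal{H}'$, and hence the quotient again contains a non-trivial module, contradicting minimality. With that extra branch spelled out, your case analysis is complete and is in fact more rigorous than the paper's one-line treatment of the first case.
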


Note that if a graph $G$ has a modular decomposition with the quotient graph $K_2$, where both modules are non-trivial, then this is a split decomposition.


\section{Distance Preserving Graphs}\label{dp_graph}
In this section we investigate some conditions under which $G[\mathcal{H}]$ is distance preserving. A \emph{path} in a graph is a sequence of distinct vertices with an edge between every consecutive pair. The {\em distance} between vertices $u,v$ in $G$, denoted $d_G(u,v)$, is the minimal length of a path connecting these vertices.
 If it is clear from  context we use $d(u,v)$, instead of $d_G(u,v)$. A path $\rho$ from $u$ to $v$ with length $d(u,v)$ is called $u$--$v$ {\em geodesic}. An induced subgraph $H$ of a graph $G$ is called an {\em isometric} subgraph, denoted $ H \leq G $, if $d_H(u,v)=d_G(u,v)$ for every pair of vertices $u,v \in V(H)$. A graph $G$ is called {\em distance preserving (dp)} if it has an~$i$-vertex isometric subgraph for every $1 \leq i \leq |V(G)|$.
 
 We begin by considering the relationship between the geodesic paths in $G$ and the geodesic paths in $G[\mathcal{H}]$:

\begin{lem}\label{lem:geodesics}
Consider a path $g_1,g_2,\ldots,g_\ell$ in $G$. A path $(g_1,h_1),(g_2,h_2),\ldots,(g_\ell,h_\ell)$ is $(g_1,h_1)$--$(g_\ell,h_\ell)$ geodesic in $G[\mathcal{H}]$ if and only if~$g_1,g_2,\ldots,g_\ell$ is $g_1$--$g_\ell$ geodesic in $G$.
\begin{proof}
First consider the forward direction, so suppose $(g_1,h_1),\ldots,(g_\ell,h_\ell)$ is $(g_1,h_1)$--$(g_\ell,h_\ell)$ geodesic. If $g_1,\ldots,g_\ell$ is not $g_1$--$g_\ell$ geodesic, then there is a~$g_1$--$g_\ell$ geodesic path $g'_1,\ldots,g'_k$, where $g'_1=g_1$, $g'_k=g_\ell$ and $k<\ell$. However, this would imply that there is a $(g_1,h_1)$--$(g_\ell,h_\ell)$ geodesic path $(g'_1,h'_1),\ldots,(g'_k,h'_k)$, where $h'_1=h_1$, $h'_k=h'_\ell$ and~$h'_i\in V(H_{g'_i})$ for all $1<i<k$, which contradicts $(g_1,h_1),\ldots,(g_\ell,h_\ell)$ being~$(g_1,h_1)$--$(g_\ell,h_\ell)$ geodesic. The backwards direction follows by an analogous argument.
\end{proof}
\end{lem}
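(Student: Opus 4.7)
The plan is to prove both directions by contrapositive, using the fact that paths in $G$ lift to paths of the same length in $G[\mathcal{H}]$ (via any choice of second coordinates in the appropriate $H_{g_i}$'s), while paths in $G[\mathcal{H}]$ project to walks of at most the same length in $G$ by reading off first coordinates. Both statements are immediate from the edge description in Definition~\ref{facial}: whenever $uv\in E(G)$, every pair $(u,x),(v,y)$ with $x\in V(H_u)$ and $y\in V(H_v)$ is joined by an edge in $G[\mathcal{H}]$.

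For the forward direction, I would assume that $(g_1,h_1),\ldots,(g_\ell,h_\ell)$ is $(g_1,h_1)$--$(g_\ell,h_\ell)$ geodesic and suppose, for contradiction, that $g_1,\ldots,g_\ell$ is not $g_1$--$g_\ell$ geodesic. Fix a shorter $g_1$--$g_\ell$ geodesic $g_1=g'_1,g'_2,\ldots,g'_k=g_\ell$ with $k<\ell$, set $h'_1:=h_1$, $h'_k:=h_\ell$, and pick $h'_i\in V(H_{g'_i})$ arbitrarily for $1<i<k$. Since each $g'_ig'_{i+1}$ is an edge of $G$, the lifted sequence $(g'_1,h'_1),\ldots,(g'_k,h'_k)$ is a path in $G[\mathcal{H}]$ of length $k-1<\ell-1$, contradicting the geodesic assumption.

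For the backward direction, I would introduce the projection map $\pi$ that sends a walk $(u_1,x_1),\ldots,(u_m,x_m)$ in $G[\mathcal{H}]$ to the sequence obtained from $u_1,\ldots,u_m$ by collapsing consecutive equal entries. Each edge of $G[\mathcal{H}]$ either preserves the first coordinate (an intra-module edge) or corresponds to an edge of $G$ (an inter-module edge), so $\pi$ produces a genuine walk in $G$ of length at most $m-1$. Assuming $g_1,\ldots,g_\ell$ is $g_1$--$g_\ell$ geodesic but that the lifted path is not, I would take a shorter $(g_1,h_1)$--$(g_\ell,h_\ell)$ path $P'$ of length strictly less than $\ell-1$ and apply $\pi$ to obtain a $g_1$--$g_\ell$ walk in $G$ of length strictly less than $\ell-1$, contradicting $d_G(g_1,g_\ell)=\ell-1$.

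The main technical care, rather than a real obstacle, lies in the projection step: one must verify that collapsing repeated first-coordinates yields consecutive distinct entries that are adjacent in $G$, and that this collapse can only decrease the length. Both points are immediate from the two-part description of $E(G[\mathcal{H}])$, so the argument is essentially bookkeeping on top of the lifting/projection correspondence.
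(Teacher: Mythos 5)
Your proof is correct and follows essentially the same route as the paper: the forward direction is the identical lifting argument, and your backward direction carefully spells out the projection/collapsing step that the paper dismisses with ``an analogous argument.'' If anything, your explicit verification that collapsing repeated first coordinates yields a genuine walk in $G$ of no greater length is more complete than the paper's treatment.
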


\begin{lem}\label{L1}
Consider a connected graph $G$ with $|G| \ge 2$ and a set of graphs  $\mathcal{H} = \{H_v\}_{v\in V(G)}$.
\begin{enumerate}[(a)]
\item If $x\in V(H_u)$ and $y\in V(H_v)$ are distinct vertices, with $u,v\in V(G)$, then:
$$
d_{G[\mathcal{H}]}((u,x),(v,y)) =
\begin{cases}
d_G(u,v), &\mbox{if } u\ne v, \\
2, &\mbox{if } u=v \textnormal{ and } xy\notin E(H_u), \\
1, &\text{otherwise}.
\end{cases}
$$\label{casea}
\item If $u,v$ are distinct vertices of $G$, then $d_G(u,v)= d_{G[\mathcal{H}]}((u,x),(v,y))$, for any $x\in H_u$ and $y\in H_v$.\label{caseb}
\end{enumerate}
\end{lem}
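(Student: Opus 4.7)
The plan is to establish part (a) by a three-way case split matching its piecewise definition, and then to observe that part (b) is immediate from the $u \neq v$ case of part (a), since that case gives $d_{G[\mathcal{H}]}((u,x),(v,y)) = d_G(u,v)$ independently of the choice of $x \in V(H_u)$ and $y \in V(H_v)$.

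The two cases with $u = v$ are short and follow straight from the edge definition of $G[\mathcal{H}]$. If $xy \in E(H_u)$, then $(u,x)(u,y)$ is an edge of $G[\mathcal{H}]$, giving distance $1$. If $xy \notin E(H_u)$, the same definition rules out such an edge, so the distance is at least $2$; for the matching upper bound I would use the hypotheses that $G$ is connected and $|G| \geq 2$ to pick any neighbour $w$ of $u$ in $G$ together with any $z \in V(H_w)$, producing the length-$2$ walk $(u,x)$--$(w,z)$--$(u,y)$.

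The substantive case is $u \neq v$. For the inequality $d_{G[\mathcal{H}]}((u,x),(v,y)) \leq d_G(u,v)$, I would lift a $u$--$v$ geodesic $u = g_0, g_1, \ldots, g_\ell = v$ in $G$ by setting $h_0 = x$, $h_\ell = y$, and choosing $h_i \in V(H_{g_i})$ arbitrarily for interior $i$; each consecutive pair becomes an edge of $G[\mathcal{H}]$ because $g_i g_{i+1} \in E(G)$, and the resulting path has length $\ell = d_G(u,v)$, with distinct vertices since the $g_i$ are distinct. For the reverse inequality I would project an arbitrary $(u,x)$--$(v,y)$ path in $G[\mathcal{H}]$ onto its first coordinates. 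By the edge definition, each consecutive pair of the projection either coincides (when the lifted edge came from some $H_v$) or forms an edge of $G$, so after contracting runs of equal consecutive entries one obtains a genuine walk from $u$ to $v$ in $G$ of length at most the length of the original path, forcing $d_G(u,v) \leq d_{G[\mathcal{H}]}((u,x),(v,y))$.

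The main obstacle, such as it is, is the projection step: an arbitrary geodesic in $G[\mathcal{H}]$ need not be a lifted path in the sense of Lemma~\ref{lem:geodesics}, because two consecutive vertices may share the same first coordinate. One therefore cannot quote that lemma verbatim for the lower bound, and must argue instead that contracting consecutive duplicates in the projection produces an honest walk in $G$ of length no greater than the original.
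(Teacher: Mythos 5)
Your proof is correct, but it takes a more self-contained route than the paper for the substantive case $u\neq v$. The paper simply cites Lemma~\ref{lem:geodesics} for that case and for part~(b): the backwards direction of that lemma says that lifting a $u$--$v$ geodesic of $G$ yields a geodesic of $G[\mathcal{H}]$ of the same length, which gives the equality at once. You instead prove both inequalities from first principles: the upper bound by the same lifting construction, and the lower bound by projecting an arbitrary $(u,x)$--$(v,y)$ path onto first coordinates and contracting consecutive repeats to obtain a walk in $G$ of no greater length. Your treatment of the lower bound is actually the more careful one: the difficulty you flag --- that a path in $G[\mathcal{H}]$ may have consecutive vertices with equal first coordinate, so it is not a lifted path in the sense of Lemma~\ref{lem:geodesics} --- is real, and it is precisely the point that the paper's proof of Lemma~\ref{lem:geodesics} waves away with ``the backwards direction follows by an analogous argument.'' So the paper's approach buys brevity by delegating to an earlier lemma, while yours makes the projection-and-contraction step explicit where the paper leaves it implicit. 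Your handling of the two $u=v$ cases coincides with the paper's (the length-two walk through a neighbouring fibre, using connectivity and $|G|\ge 2$), and your observation that part~(b) is immediate from the $u\neq v$ case of part~(a) matches the paper's deduction of (b) from Lemma~\ref{lem:geodesics}.
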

\begin{proof}
First consider part \eqref{casea}. If $u\not=v$, the result follows by Lemma~\ref{lem:geodesics}. If $u=v$ and~$xy\notin E(H_u)$, then we have an $x$--$y$ geodesic path $xzy$, where $z\in V(H_w)$ and $w$ is any neighbour of $u$ in $G$. Finally, if $u=v \textnormal{ and } xy\in E(H_u)$, then $(u,x)(v,y)\in E(G[\mathcal{H}])$. This completes part~\eqref{casea}. Part~\eqref{caseb} follows by Lemma~\ref{lem:geodesics}.
\end{proof}

\begin{cor}\label{cor:connected}
The graph $G[\mathcal{H}]$ is connected if and only if $G$ is connected.
\begin{proof}
A graph is connected if and only if the distance between all vertices is finite. Therefore, the result follows immediately from Lemma~\ref{lem:geodesics}.
\end{proof}
\end{cor}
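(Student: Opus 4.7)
My plan is to prove the two directions separately by unpacking Definition~\ref{facial} and applying what has already been established.

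First I would dispose of the trivial case $|G|=1$: here $G[\mathcal{H}]$ is just $H_v$ for the unique vertex $v$ of $G$, and both $G=K_1$ and $H_v$ (by the standing assumption that every graph in this paper is connected) are connected, so there is nothing to prove.

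For $|G|\ge 2$ and the backward direction, I would simply invoke Lemma~\ref{L1}\eqref{casea}, which supplies an explicit finite value (one of $d_G(u,v)$, $2$, or $1$) for $d_{G[\mathcal{H}]}((u,x),(v,y))$ whenever $(u,x)$ and $(v,y)$ are distinct vertices. Since every pairwise distance is finite, $G[\mathcal{H}]$ is connected.

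For the forward direction I would argue the contrapositive. Assume $G$ is disconnected and pick $u,v$ in different components of $G$ together with arbitrary $x\in V(H_u)$ and $y\in V(H_v)$. By Definition~\ref{facial}, every edge of $G[\mathcal{H}]$ has the form $(a,p)(b,q)$ with either $a=b$ or $ab\in E(G)$. Hence any hypothetical path in $G[\mathcal{H}]$ from $(u,x)$ to $(v,y)$ would project, after collapsing consecutive repetitions in the first coordinate, to a walk in $G$ from $u$ to $v$, contradicting that $u$ and $v$ lie in different components. Therefore no such path exists and $G[\mathcal{H}]$ is disconnected. The only step that requires any thought is this projection argument, but it falls out directly from the edge definition, so there is no serious obstacle.
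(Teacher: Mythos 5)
Your proof is correct, but it takes a somewhat different route from the paper's. The paper derives both directions in one line from Lemma~\ref{lem:geodesics} (geodesics in $G[\mathcal{H}]$ with distinct first coordinates correspond to geodesics in $G$), leaving the reader to supply the details; you instead use Lemma~\ref{L1}\eqref{casea} for the backward direction and a direct projection argument from Definition~\ref{facial} for the forward direction. Your forward direction is arguably the more careful of the two: Lemma~\ref{lem:geodesics} takes a path in $G$ as its hypothesis, so to conclude that connectivity of $G[\mathcal{H}]$ forces connectivity of $G$ one really does need the observation that every edge of $G[\mathcal{H}]$ either stays within a single $H_a$ or joins modules indexed by adjacent vertices of $G$, which is exactly the projection step you spell out. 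You also handle the $|G|=1$ case explicitly (where Lemma~\ref{L1} does not apply and one must fall back on the standing assumption that the $H_v$ are connected), a case the paper's proof passes over. The trade-off is only length: the paper's argument is shorter because it treats the correspondence of paths as immediate, while yours makes each reduction explicit.
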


Note that Lemma~\ref{L1} and Corollary~\ref{cor:connected} generalise Lemma 3.1 in \cite{khalifeh2015distance} from the lexicographic product to the generalised lexicographic product. In the remainder of this section we generalise some more results of Section 3 in \cite{khalifeh2015distance}.

In order to state the main theorem of this section we need some notation. Let 
$$\ndp(G) = \{k\ | \text{ G has no isometric subgraph with k vertices}\},$$
so a graph $G$ is dp if and only if $\ndp(G) = \emptyset$. If $a$ and $b$ are integers with $a < b$, then let~$[a, b] = \{a, a + 1, a + 2, \ldots , b\}$. Given a subgraph $M$ of $G[\mathcal{H}]$, let $\pi(M)$ be the induced subgraph of $G$ with the vertex set:
 $$ V(\pi(M)) = \{a\in V(G) \ |\ (a,x)\in M\text{ for some }x\in H_a\}.$$

\begin{thm}\label{thm2}
Let $G$ be a connected graph with $|G| \ge 2$. Any generalised lexicographic product graph  
$G[\mathcal{H}]$ is $dp$ if and only if 
for any $k \in \ndp(G)$, there is a subgraph $L \leq G$ with $|L|< k \le \sum_{u\in L}|H_u|$. 
\end{thm}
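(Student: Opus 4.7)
The plan is to split the proof into two inverse correspondences between isometric subgraphs of $M := G[\mathcal{H}]$ and isometric subgraphs of $G$, then combine them with Lemmas~\ref{L1} and~\ref{lem:geodesics}. The first correspondence (\emph{expansion}) asserts: if $L \le G$ is isometric with $|L| \ge 2$ and $S_u \subseteq V(H_u)$ is nonempty for each $u \in V(L)$, then the induced subgraph $N$ of $M$ on $\bigcup_{u \in V(L)} \{u\} \times S_u$ is isometric in $M$. The second (\emph{projection}) asserts: if $N$ is isometric in $M$, then $\pi(N)$ is isometric in $G$.

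For expansion, I would split on two vertices $(u,x), (v,y) \in V(N)$. If $u=v$ and $xy \notin E(H_u)$, then $d_M((u,x),(u,y))=2$ by Lemma~\ref{L1}; since $L$ is isometric in a connected graph $G$ with $|L| \ge 2$, $L$ is itself connected, so $u$ has a neighbour $w$ in $L$, and any $z \in S_w$ produces a length-$2$ path in $N$ matching this distance. If $u \ne v$, I would lift a geodesic $u = u_0, \ldots, u_\ell = v$ in $L$ to a path $(u_0, x_0), \ldots, (u_\ell, x_\ell)$ in $N$ by taking $x_i \in S_{u_i}$ with $x_0 = x$ and $x_\ell = y$; Lemma~\ref{lem:geodesics} guarantees this lift is a geodesic in $M$. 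For projection, given $u,v \in V(\pi(N))$, a geodesic in $N$ from some $(u,x)$ to some $(v,y)$ has length $d_G(u,v)$ by Lemma~\ref{L1}; reading off the sequence of first coordinates yields a $u$--$v$ walk in $\pi(N)$, and tightness of $d_G(u,v)$ forces every step to traverse an edge of $G$ rather than remain inside a single module, producing a geodesic of $\pi(N)$ of the required length.

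The forward direction is then immediate: if $M$ is dp and $k \in \ndp(G)$, take an isometric $N \le M$ of order $k$ and set $L := \pi(N)$. Projection gives $L$ isometric in $G$ with $|L| \le k \le \sum_{u \in V(L)} |H_u|$, and since $L$ would itself be an isometric $k$-vertex subgraph of $G$ whenever $|L|=k$, the assumption $k \in \ndp(G)$ forces the strict inequality $|L|<k$. For the backward direction, I would construct an isometric $k$-vertex subgraph of $M$ for each $k \in \{1, \ldots, |M|\}$. The case $k=1$ is trivial. If $k \notin \ndp(G)$ and $k \ge 2$ (so automatically $k \le |G|$), take an isometric $L \le G$ of order $k$, pick a single vertex from each $H_u$, and apply expansion. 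If $k \in \ndp(G)$, the hypothesis supplies an isometric $L \le G$ with $|L|<k \le \sum_{u \in V(L)} |H_u|$; choose nonempty $S_u \subseteq V(H_u)$ with $\sum_u |S_u|=k$ and apply expansion.

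The only subtlety is when the supplied $L$ has $|L|=1$, since expansion requires $|L| \ge 2$. Here $k \ge 3$ (because any edge in $G$ is an isometric $2$-vertex subgraph, so $2 \notin \ndp(G)$ when $G$ is connected with $|G|\ge 2$), and I would replace $L=\{u\}$ by the edge $L' = \{u,v\}$ for any neighbour $v$ of $u$ in $G$: then $L'$ is isometric in $G$, $|L'|=2<k$, and $|H_u|+|H_v| \ge |H_u|+1 > k$, so expansion applied to $L'$ supplies the required subgraph. The main obstacle I expect is the projection step: one must argue cleanly from the tightness of the $M$-geodesic's length that no intra-module step survives in the projected walk, after which the forward direction is immediate and the backward direction reduces to the case bookkeeping just sketched.
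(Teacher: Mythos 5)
Your proposal is correct and takes essentially the same route as the paper: both arguments rest on the correspondence $M \le G[\mathcal{H}]$ if and only if $\pi(M)\le G$ (your expansion/projection pair is precisely the two directions of the paper's Statement~\eqref{project}), followed by the same counting of achievable orders $|L|\le k\le\sum_{u\in L}|H_u|$ over isometric $L\le G$. The one refinement worth noting is your explicit replacement of a singleton $L$ by an incident edge, which sidesteps the fact that the equivalence~\eqref{project} only applies when $|\pi(M)|\ge 2$ --- a corner case the paper's interval-union argument passes over silently.
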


\begin{proof}
We claim that for an induced subgraph $M$ of $G[\mathcal{H}]$, with $\pi(M)$ having at least two vertices,
\begin{align}\label{project}
\text{ $M \leq G[\mathcal{H}]$ if and only if  $\pi(M) \leq G.$  } 
\end{align}

To prove the backwards direction of the claim, assume that $\pi(M) \le G$ and consider distinct
vertices $(u,x),(v,y) \in V (M)$.
If $u\neq v$, then note that~$\pi(M)$ can be considered as a quotient graph of $M$, so Lemma~\ref{L1}\eqref{casea} and $\pi(M)\le G$ gives
$$ d_{M}((u,x),(v,y)) = d_{\pi(M)}(u, v) = d_G(u, v) = d_{G[\mathcal{H}]}((u,x),(v,y)).$$ If $u = v$ and $xy\notin E(H_u)$, then a similar proof shows that $ d_{M}((u,x),(v,y)) = 2 = d_{G[\mathcal{H}]}((u,x),(v,y)).$
 Finally, if $u = v$ and $xy\in E(H_u)$, then since $M$ is induced we have 
 $ d_{M}((u,x),(v,y)) = 1 = d_{G[\mathcal{H}]}((u,x),(v,y)).$  The forward direction of the claim follows by an analogous argument.

Now we prove the theorem. By the definition of the quotient graph we know that $|\pi(M)| \le |M| \le \sum_{u\in \pi(M)}|H_u|$. Statement~\eqref{project} implies that $G[\mathcal{H}]$ is dp if and only if 
\begin{align}\label{eq2} 
 \bigcup_{L\le G}\big[|L|, \sum_{u\in L}|H_u|\big] = \big[1, \sum_{u\in G}|H_u|\big].
\end{align}
Since $1, 2, |G|$ are never in $\ndp(G)$, Equality \eqref{eq2} is equivalent to: 
for any $k\in \ndp(G)$, there is an $L \leq G$ with $|L|< k \le \sum_{u\in L}|H_u|$.
\end{proof}

Theorem~\ref{thm2} generalises Theorem 3.2 in~\cite{khalifeh2015distance}. 
Suppose $G$ has isometric subgraphs with $a$ and $b$ vertices. Then we say two elements  $a,b$ {\em bound a non-dp interval} if the set of integers $c$ with $a< c < b$ is nonempty and consists only of elements in~$\ndp(G)$.

\begin{cor}\cite[Theorem 3.2]{khalifeh2015distance}
Let $G$ be a connected graph with $|G|\ge2$ and $H$ be an arbitrary graph with $|H|=n$. Then 
$G[H] \text{\  is dp if and only if \  } b\leq an+1$ 
for every pair~$a,b \in \ndp(G)$ bounding a non-dp interval.
\end{cor}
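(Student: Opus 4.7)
The plan is to deduce the corollary directly from Theorem~\ref{thm2} by specializing to the case $H_v = H$ for every $v \in V(G)$. In this setting $\sum_{u \in L}|H_u| = |L|\,n$ for every subgraph $L \le G$, so Theorem~\ref{thm2} becomes: $G[H]$ is dp if and only if for each $k \in \ndp(G)$ there exists an isometric subgraph $L \le G$ with $|L| < k \le |L|\,n$. The remaining work is to show that this ``gap-covering'' criterion is equivalent to the arithmetic condition $b \le an+1$ over all pairs $a,b$ bounding a non-dp interval, where, following the definition preceding the corollary, $a$ and $b$ are orders of isometric subgraphs of $G$ (so $a,b \notin \ndp(G)$) and every integer strictly between them lies in $\ndp(G)$.

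For the forward direction, I would assume the specialized criterion holds and fix a pair $a,b$ bounding a non-dp interval. Applying the criterion to $k := b-1 \in \ndp(G)$ produces an isometric $L \le G$ with $|L| < b-1$ and $|L|\,n \ge b-1$. Since every integer in the open interval $(a,b)$ belongs to $\ndp(G)$, any order of an isometric subgraph that is strictly less than $b$ must be at most $a$; hence $|L| \le a$, and combining with $|L|\,n \ge b-1$ gives $an \ge b-1$, i.e.\ $b \le an + 1$.

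For the reverse direction, I would assume $b \le an + 1$ holds for every such pair and fix an arbitrary $k \in \ndp(G)$. Let $a$ denote the largest order of an isometric subgraph of $G$ strictly below $k$, and $b$ the smallest order of an isometric subgraph of $G$ that is at least $k$. Both exist because $1$ and $|G|$ are always orders of isometric subgraphs of $G$, and $k \in \ndp(G)$ forces $a < k < b$, so by definition $a$ and $b$ bound a non-dp interval. The hypothesis then yields $b \le an + 1$, and choosing an isometric $L \le G$ with $|L| = a$ gives $|L| < k$ and $|L|\,n = an \ge b - 1 \ge k$, verifying the condition of Theorem~\ref{thm2}.

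I expect the only subtlety to be the boundary bookkeeping: making sure the bracketing orders $a$ and $b$ around each $k \in \ndp(G)$ always exist (which is guaranteed by the trivial isometric subgraphs of orders $1$ and $|G|$), and being explicit that in the definition $a,b$ denote orders of isometric subgraphs rather than elements of $\ndp(G)$. Once this convention is pinned down, the proof is essentially a translation between the two formulations and should fit comfortably into a short paragraph.
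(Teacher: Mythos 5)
Your proposal is correct and follows essentially the same route as the paper: both specialize Theorem~\ref{thm2} to the constant case $H_v=H$ (where $\sum_{u\in L}|H_u|=|L|n$) and then translate the resulting criterion into the inequality $b\le an+1$ over pairs bounding non-dp intervals. The paper compresses this translation into one sentence via Equality~\eqref{eq2}, whereas you verify the equivalence explicitly (including the correct reading that $a,b$ are orders of isometric subgraphs, not elements of $\ndp(G)$), which is a faithful elaboration rather than a different argument.
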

\begin{proof}
When $H_u = H$ for all vertices $u$ in $G$, Equality~\eqref{eq2} in the proof of Theorem~\ref{thm2} is equivalent to $[a,an]\cup[b,bn]$ being an interval for every pair $a,b$ bounding a non-dp interval, which is equivalent to $b\le an+1$.
\end{proof}

The next result is an immediate corollary of Theorem~\ref{thm2}.

\begin{cor}\label{cor1}
If $G$ is dp, with $|G| \ge 2$, then $G[\mathcal{H}]$ is dp for any set of graphs $\mathcal{H}$.
\end{cor}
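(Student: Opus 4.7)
The plan is to apply Theorem~\ref{thm2} directly and observe that the hypothesis forces the quantified condition to be vacuous. By definition, $G$ is dp precisely when it has an isometric subgraph of every possible order $1,2,\ldots,|G|$, which is exactly the statement $\ndp(G)=\emptyset$. Hence the universal statement appearing in Theorem~\ref{thm2}, namely ``for every $k\in\ndp(G)$, there is a subgraph $L\le G$ with $|L|<k\le\sum_{u\in L}|H_u|$,'' is vacuously satisfied.

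Since the hypothesis $|G|\ge 2$ is exactly what is needed to invoke Theorem~\ref{thm2}, the ``if'' direction of that theorem immediately yields that $G[\mathcal{H}]$ is dp for every choice of $\mathcal{H}=\{H_v\}_{v\in V(G)}$. No calculation, and no case analysis on the individual $|H_v|$, is needed.

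There is no real obstacle here: the corollary is a one-line deduction from Theorem~\ref{thm2}, and the only thing to verify is the trivial equivalence between $G$ being dp and $\ndp(G)=\emptyset$, which is just the definition of $\ndp$ given in the paragraph preceding Theorem~\ref{thm2}. In effect, the proof writes itself as ``$\ndp(G)=\emptyset$, so the condition of Theorem~\ref{thm2} holds vacuously, hence $G[\mathcal{H}]$ is dp.''
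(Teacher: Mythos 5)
Your proof is correct and takes exactly the route the paper intends: the paper simply remarks that this corollary is immediate from Theorem~\ref{thm2}, and the observation that $G$ being dp means $\ndp(G)=\emptyset$, making the theorem's condition vacuously true, is precisely that one-line deduction.
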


Since any tree is dp, Corollary \ref{cor1} implies that the graph in Figure~\ref{fig:GFH} is dp.
The graph~$G[\mathcal{H}]$ being dp does not necessarily imply that $G$ is dp. This can be seen in Figure~\ref{fig:figure2} which shows the graph $C_5[\mathcal{H}]$, where~$C_5$ is the $5$-cycle and~$\mathcal{H}$ substitutes $K_2$ for one vertex and $K_1$ for all others. It is straightforward to verify that $C_5[\mathcal{H}]$ is dp, however $C_5$ is non-dp.

The next result follows easily from Lemma~\ref{L1} and Statement~\eqref{project} in the proof of Theorem~\ref{thm2}.
\begin{cor}\label{cor:subgraph}
 For a connected graph $G$ with $|G| \ge 2$ and an induced subgraph $M$ of $G[\mathcal{H}]$,
\begin{align*}
  M\le G[\mathcal{H}]  \text{ if and only if }
\begin{cases}
\pi(M)\le G &   \textnormal{ when }|\pi(M)|\ge 2, \\
\diam(M)\le 2 & \textnormal{ when }|\pi(M)|=1.
\end{cases}
\end{align*}
\end{cor}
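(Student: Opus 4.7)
The plan is to split on the value of $|\pi(M)|$ and dispatch each case using the tools already in hand. When $|\pi(M)|\ge 2$, I would simply note that the claim is literally Statement~\eqref{project} proved inside Theorem~\ref{thm2}, so nothing further is required in that branch.

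The remaining case is $|\pi(M)|=1$, where $V(M)$ sits inside $\{u\}\times V(H_u)$ for a single $u\in V(G)$. Here I would first apply Lemma~\ref{L1}\eqref{casea} to reduce the distance in $G[\mathcal{H}]$ between two distinct vertices $(u,x),(u,y)\in V(M)$ to $1$ when $xy\in E(H_u)$ and to $2$ otherwise. This step is where the standing hypothesis that $G$ is connected with $|G|\ge 2$ genuinely enters, since the length-two realiser of distance $2$ passes through some $(w,z)$ with $w\in N_G(u)$.

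With the reduction in place, both directions should drop out quickly. For the forward direction: if $M\le G[\mathcal{H}]$, then $d_M((u,x),(u,y))=d_{G[\mathcal{H}]}((u,x),(u,y))\le 2$ for every pair, so $\diam(M)\le 2$. For the backward direction, assuming $\diam(M)\le 2$: adjacent pairs in $G[\mathcal{H}]$ remain adjacent in $M$ since $M$ is induced, giving $d_M=1=d_{G[\mathcal{H}]}$, while for non-adjacent pairs $d_{G[\mathcal{H}]}=2$ by Lemma~\ref{L1}\eqref{casea} and $2\le d_M\le\diam(M)\le 2$ forces $d_M=2$.

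I do not anticipate a real obstacle; the only items worth flagging in the write-up are the trivial sub-case $|V(M)|=1$ (where $\diam(M)=0$ and $M\le G[\mathcal{H}]$ automatically) and the need to cite $|G|\ge 2$ together with connectedness at the precise moment the neighbour $w$ of $u$ is selected.
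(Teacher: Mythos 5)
Your proposal is correct and follows exactly the route the paper intends: the case $|\pi(M)|\ge 2$ is Statement~\eqref{project} verbatim, and the case $|\pi(M)|=1$ is the straightforward consequence of Lemma~\ref{L1}\eqref{casea} that you spell out (the paper simply asserts the corollary ``follows easily'' from these two facts without writing out the details). Your extra care about the singleton sub-case and about where connectedness and $|G|\ge 2$ are used is sound and adds nothing problematic.
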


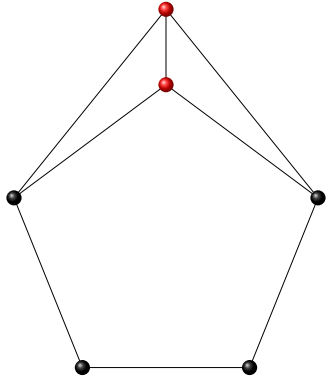
\begin{figure}[ht]
\centering
\begin{center}
    \begin{tikzpicture}	[thick, scale=0.5]
    

        \draw[line width=.01pt] (-2.2,-3.5) to  (2.2,-3.5);      
        \draw[line width=.01pt] (-4,1)to  (-2.2,-3.5);
        \draw[line width=.01pt] (4,1) to  (2.2,-3.5);

        \draw[line width=.01pt] (0,4) to  (4,1);
        \draw[line width=.01pt] (0,6) to  (4,1);
        \draw[line width=.01pt] (0,4) to  (-4,1);
        \draw[line width=.01pt] (0,6) to  (-4,1);
        \draw[line width=.01pt] (0,6) to  (0,4);
    
    	\shade[shading=ball, ball color=red] (0,6) circle (.2);
        \shade[shading=ball, ball color=red] (0,4) circle (.2);
	    \shade[shading=ball, ball color=black] (4,1) circle (.2);
	
		\shade[shading=ball, ball color=black] (-4,1) circle (.2);
        \shade[shading=ball, ball color=black] (-2.2,-3.5) circle (.2);
	    \shade[shading=ball, ball color=black] (2.2,-3.5) circle (.2);

	\end{tikzpicture}
\end{center}

\caption[]{The graph $C_5[\mathcal{H}]$, where~$\mathcal{H}$ substitutes $K_2$ for one vertex and $K_1$ for all others}
\label{fig:figure2}
\end{figure}

Recall that a graph $G$ is \emph{sequentially distance preserving}, which we denote sdp, if we can order the vertices $v_1,\ldots,v_n$ of $G$ such that the graph induced by $v_1,\ldots,v_k$ is isometric, for all~$1\le i\le n$. Corollary~\ref{cor:subgraph} implies the following result on sdp graphs:

\begin{cor}
If $G$ is sdp, with $|G| \ge 2$, then $G[\mathcal{H}]$ is sdp for any set of graphs $\mathcal{H}$.
\end{cor}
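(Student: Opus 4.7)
The plan is to exhibit an explicit sdp ordering on $V(G[\mathcal{H}])$ built from the sdp ordering $v_1,\ldots,v_n$ of $G$. For each $v\in V(G)$, I fix a representative $h_v\in V(H_v)$. I then list the vertices of $G[\mathcal{H}]$ in two phases. In the first phase I delete all the non-representative vertices, grouped by module, visiting the modules in the order $H_{v_1},H_{v_2},\ldots,H_{v_n}$ (the internal order inside each $H_{v_i}\setminus\{h_{v_i}\}$ can be arbitrary). In the second phase I delete the representatives $(v_1,h_{v_1}),(v_2,h_{v_2}),\ldots,(v_n,h_{v_n})$ in that order.

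To verify this ordering works, let $M_k$ denote the induced subgraph obtained by deleting the first $k$ vertices. Throughout the first phase every module still contains at least its representative, so $\pi(M_k)=V(G)$ and hence $\pi(M_k)\le G$ trivially. Since $|G|\ge 2$, Corollary~\ref{cor:subgraph} immediately gives $M_k\le G[\mathcal{H}]$. In the second phase, after $i$ further deletions the remaining vertices are exactly $(v_{i+1},h_{v_{i+1}}),\ldots,(v_n,h_{v_n})$, so $\pi(M_k)=\{v_{i+1},\ldots,v_n\}$, which is isometric in $G$ by the sdp property of $G$. For $i\le n-2$ the quotient has at least two vertices and Corollary~\ref{cor:subgraph} again yields $M_k\le G[\mathcal{H}]$; for $i=n-1$ the graph $M_k$ is a single vertex and is trivially isometric.

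The key idea is to retain one representative in every module throughout the first phase, so that the projection to $G$ remains all of $G$ and Corollary~\ref{cor:subgraph} applies directly; the sdp property of $G$ is only used in the second phase, to sequence the deletion of the surviving representatives. I do not anticipate any real obstacle: once Corollary~\ref{cor:subgraph} is in hand, the argument reduces to choosing the right ordering, and the two-phase construction above is essentially forced by the case distinction in that corollary.
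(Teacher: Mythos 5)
Your proof is correct and takes the route the paper intends: the paper presents this corollary as an immediate consequence of Corollary~\ref{cor:subgraph} without exhibiting an ordering, and your two-phase ordering (first strip every module down to a single representative, keeping $\pi(M_k)=V(G)$, then delete the representatives following the sdp order of $G$) is a valid explicit witness verified by exactly that corollary. As a minor remark, the two-phase structure is genuinely needed, since deleting whole modules one at a time could leave a final stage with $|\pi(M_k)|=1$ and diameter exceeding $2$; your construction avoids this.
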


An illustrative example is shown in Figure~\ref{fig:figure3}. This 
figure depicts a graph $M$ formed of a social network of friendships between $44$ members of a community of international students, along with a modular decomposition of $M$ and a minimal quotient graph of $M$. By the results of Section~\ref{dp_graph}, to show that $M$ is distance preserving it is sufficient to show that the quotient graph is distance preserving. Note that the quotient graph does not contain any induced cycles of length greater than $4$, so the quotient graph is dp by Theorem 3.5 of \cite{SmithZahedi16}. Therefore, Corollary~\ref{cor1} implies that $M$ is distance preserving.

%
%
%
%
%
\begin{figure}
\centering
\begin{subfigure}[b]{0.8\textwidth}
\centering
\begin{tikzpicture}
\tikzset{VertexStyle/.style = {shape = circle,inner sep=0pt,minimum size = 0pt,draw=black}}
\tikzset{EdgeStyle/.style = {draw=black}}

\Vertex[L=\tiny\hbox{$1$},x=4cm,y=2.5cm]{v0}
\Vertex[L=\tiny\hbox{$2$},x=3.75cm,y=2cm]{v1}
\Vertex[L=\tiny\hbox{$3$},x=4.3cm,y=1.75cm]{v2}
\Vertex[L=\tiny\hbox{$4$},x=4.3cm,y=2.2cm]{v3}
\Vertex[L=\tiny\hbox{$5$},x=4cm,y=1.5cm]{v4}
\Vertex[L=\tiny\hbox{$6$},x=3.75cm,y=1.5cm]{v5}
\Vertex[L=\tiny\hbox{$7$},x=4.1cm,y=2cm]{v6}
\Vertex[L=\tiny\hbox{$8$},x=3cm,y=1cm]{v7}
\Vertex[L=\tiny\hbox{$9$},x=3.7cm,y=2.75cm]{v8}
\Vertex[L=\tiny\hbox{$10$},x=3.8991cm,y=4.2cm]{v9}
\Vertex[L=\tiny\hbox{$11$},x=3.1584cm,y=4.8949cm]{v10}
\Vertex[L=\tiny\hbox{$12$},x=3cm,y=4.4643cm]{v11}
\Vertex[L=\tiny\hbox{$13$},x=3.9551cm,y=3.7821cm]{v12}
\Vertex[L=\tiny\hbox{$14$},x=4.0928cm,y=3.4928cm]{v13}
\Vertex[L=\tiny\hbox{$15$},x=3.7923cm,y=4.6207cm]{v14}
\Vertex[L=\tiny\hbox{$16$},x=3.4907cm,y=4.8382cm]{v15}
\Vertex[L=\tiny\hbox{$17$},x=1.2698cm,y=4.9777cm]{v16}
\Vertex[L=\tiny\hbox{$18$},x=1.6566cm,y=4.9324cm]{v17}
\Vertex[L=\tiny\hbox{$19$},x=2cm,y=4cm]{v18}
\Vertex[L=\tiny\hbox{$20$},x=0.8cm,y=5.0cm]{v19}
\Vertex[L=\tiny\hbox{$21$},x=2.1cm,y=4.5cm]{v20}
\Vertex[L=\tiny\hbox{$22$},x=2.2cm,y=5cm]{v21}
\Vertex[L=\tiny\hbox{$23$},x=0.8612cm,y=4.4362cm]{v22}
\Vertex[L=\tiny\hbox{$24$},x=1.4891cm,y=3.6254cm]{v23}
\Vertex[L=\tiny\hbox{$25$},x=0.6cm,y=4cm]{v24}
\Vertex[L=\tiny\hbox{$26$},x=2cm,y=3.5cm]{v25}
\Vertex[L=\tiny\hbox{$27$},x=3.25cm,y=2.75cm]{v26}
\Vertex[L=\tiny\hbox{$28$},x=2.5784cm,y=2cm]{v27}
\Vertex[L=\tiny\hbox{$29$},x=2cm,y=1.75cm]{v28}
\Vertex[L=\tiny\hbox{$30$},x=3.042cm,y=1.6cm]{v29}
\Vertex[L=\tiny\hbox{$32$},x=1.7282cm,y=2.3cm]{v31}
\Vertex[L=\tiny\hbox{$33$},x=0.5cm,y=3.5cm]{v32}
\Vertex[L=\tiny\hbox{$34$},x=1.1cm,y=3.2cm]{v33}
\Vertex[L=\tiny\hbox{$35$},x=0cm,y=3.7932cm]{v34}
\Vertex[L=\tiny\hbox{$36$},x=0.9cm,y=4.1cm]{v35}
\Vertex[L=\tiny\hbox{$37$},x=1.2167cm,y=2cm]{v36}
\Vertex[L=\tiny\hbox{$38$},x=0.0cm,y=2.0433cm]{v37}
\Vertex[L=\tiny\hbox{$39$},x=0.947cm,y=2.5723cm]{v38}
\Vertex[L=\tiny\hbox{$40$},x=0.1cm,y=1.5215cm]{v39}
\Vertex[L=\tiny\hbox{$41$},x=0.5966cm,y=1.1022cm]{v40}
\Vertex[L=\tiny\hbox{$42$},x=0.0106cm,y=2.9757cm]{v41}
\Vertex[L=\tiny\hbox{$43$},x=0.1018cm,y=2.4125cm]{v42}
\Vertex[L=\tiny\hbox{$44$},x=0.4536cm,y=1.5675cm]{v43}
\Vertex[L=\tiny\hbox{$31$},x=3.1893cm,y=3.4494cm]{v44}
\Edge(v0)(v1)
\Edge(v1)(v2)
\Edge(v1)(v3)
\Edge(v1)(v4)
\Edge(v1)(v5)
\Edge(v1)(v6)
\Edge(v1)(v7)
\Edge(v1)(v8)
\Edge(v1)(v26)
\Edge(v1)(v27)
\Edge(v1)(v29)
\Edge(v1)(v44)
\Edge(v2)(v3)
\Edge(v2)(v4)
\Edge(v3)(v6)
\Edge(v4)(v5)
\Edge(v4)(v6)
\Edge(v5)(v6)
\Edge(v7)(v27)
\Edge(v8)(v44)
\Edge(v9)(v44)
\Edge(v10)(v44)
\Edge(v11)(v44)
\Edge(v12)(v44)
\Edge(v13)(v44)
\Edge(v14)(v44)
\Edge(v15)(v44)
\Edge(v16)(v19)
\Edge(v16)(v23)
\Edge(v17)(v23)
\Edge(v18)(v20)
\Edge(v18)(v23)
\Edge(v19)(v23)
\Edge(v20)(v21)
\Edge(v22)(v23)
\Edge(v22)(v34)
\Edge(v23)(v24)
\Edge(v23)(v25)
\Edge(v23)(v31)
\Edge(v23)(v32)
\Edge(v23)(v33)
\Edge(v23)(v34)
\Edge(v23)(v35)
\Edge(v23)(v36)
\Edge(v23)(v38)
\Edge(v24)(v38)
\Edge(v25)(v26)
\Edge(v25)(v27)
\Edge(v25)(v31)
\Edge(v25)(v38)
\Edge(v25)(v44)
\Edge(v26)(v27)
\Edge(v27)(v28)
\Edge(v27)(v29)
\Edge(v27)(v31)
\Edge(v27)(v44)
\Edge(v31)(v38)
\Edge(v32)(v38)
\Edge(v33)(v38)
\Edge(v34)(v38)
\Edge(v35)(v38)
\Edge(v36)(v38)
\Edge(v37)(v38)
\Edge(v38)(v39)
\Edge(v38)(v40)
\Edge(v38)(v41)
\Edge(v38)(v42)
\Edge(v38)(v43)
\end{tikzpicture}
\caption{A Graph $M$}   
\end{subfigure}
\begin{subfigure}[b]{0.59\textwidth}\centering
\begin{tikzpicture}
\tikzset{VertexStyle/.style = {shape = circle,inner sep=0pt,minimum size = 0pt,draw=black}}
\tikzset{EdgeStyle/.style = {draw=black}}

\Vertex[style={fill=green},L=\tiny\hbox{$1$},x=4cm,y=2.5cm]{v0}
\Vertex[L=\tiny\hbox{$2$},x=3.75cm,y=2cm]{v1}
\Vertex[style={fill=green},L=\tiny\hbox{$3$},x=4.3cm,y=1.75cm]{v2}
\Vertex[style={fill=green},L=\tiny\hbox{$4$},x=4.3cm,y=2.2cm]{v3}
\Vertex[style={fill=green},L=\tiny\hbox{$5$},x=4cm,y=1.5cm]{v4}
\Vertex[style={fill=green},L=\tiny\hbox{$6$},x=3.75cm,y=1.5cm]{v5}
\Vertex[style={fill=green},L=\tiny\hbox{$7$},x=4.1cm,y=2cm]{v6}
\Vertex[style={fill=pink},L=\tiny\hbox{$8$},x=3cm,y=1cm]{v7}
\Vertex[L=\tiny\hbox{$9$},x=3.7cm,y=2.75cm]{v8}
\Vertex[style={fill=red},L=\tiny\hbox{$10$},x=3.8991cm,y=4.2cm]{v9}
\Vertex[style={fill=red},L=\tiny\hbox{$11$},x=3.1584cm,y=4.8949cm]{v10}
\Vertex[style={fill=red},L=\tiny\hbox{$12$},x=3cm,y=4.4643cm]{v11}
\Vertex[style={fill=red},L=\tiny\hbox{$13$},x=3.9551cm,y=3.7821cm]{v12}
\Vertex[style={fill=red},L=\tiny\hbox{$14$},x=4.0928cm,y=3.4928cm]{v13}
\Vertex[style={fill=red},L=\tiny\hbox{$15$},x=3.7923cm,y=4.6207cm]{v14}
\Vertex[style={fill=red},L=\tiny\hbox{$16$},x=3.4907cm,y=4.8382cm]{v15}
\Vertex[style={fill=orange},L=\tiny\hbox{$17$},x=1.2698cm,y=4.9777cm]{v16}
\Vertex[style={fill=orange},L=\tiny\hbox{$18$},x=1.6566cm,y=4.9324cm]{v17}
\Vertex[L=\tiny\hbox{$19$},x=2cm,y=4cm]{v18}
\Vertex[style={fill=orange},L=\tiny\hbox{$20$},x=0.8cm,y=5.0cm]{v19}
\Vertex[L=\tiny\hbox{$21$},x=2.1cm,y=4.5cm]{v20}
\Vertex[L=\tiny\hbox{$22$},x=2.2cm,y=5cm]{v21}
\Vertex[L=\tiny\hbox{$23$},x=0.8612cm,y=4.4362cm]{v22}
\Vertex[L=\tiny\hbox{$24$},x=1.4891cm,y=3.6254cm]{v23}
\Vertex[style={fill=brown},L=\tiny\hbox{$25$},x=0.6cm,y=4cm]{v24}
\Vertex[L=\tiny\hbox{$26$},x=2cm,y=3.5cm]{v25}
\Vertex[L=\tiny\hbox{$27$},x=3.25cm,y=2.75cm]{v26}
\Vertex[L=\tiny\hbox{$28$},x=2.5784cm,y=2cm]{v27}
\Vertex[L=\tiny\hbox{$29$},x=2cm,y=1.75cm]{v28}
\Vertex[style={fill=pink},L=\tiny\hbox{$30$},x=3.042cm,y=1.6cm]{v29}
\Vertex[L=\tiny\hbox{$32$},x=1.7282cm,y=2.3cm]{v31}
\Vertex[style={fill=brown},L=\tiny\hbox{$33$},x=0.5cm,y=3.5cm]{v32}
\Vertex[style={fill=brown},L=\tiny\hbox{$34$},x=1.1cm,y=3.2cm]{v33}
\Vertex[L=\tiny\hbox{$35$},x=0cm,y=3.7932cm]{v34}
\Vertex[style={fill=brown},L=\tiny\hbox{$36$},x=0.9cm,y=4.1cm]{v35}
\Vertex[style={fill=brown},L=\tiny\hbox{$37$},x=1.2167cm,y=2cm]{v36}
\Vertex[style={fill=Blue},L=\tiny\hbox{$38$},x=0.0cm,y=2.0433cm]{v37}
\Vertex[L=\tiny\hbox{$39$},x=0.947cm,y=2.5723cm]{v38}
\Vertex[style={fill=Blue},L=\tiny\hbox{$40$},x=0.1cm,y=1.5215cm]{v39}
\Vertex[style={fill=Blue},L=\tiny\hbox{$41$},x=0.5966cm,y=1.1022cm]{v40}
\Vertex[style={fill=Blue},L=\tiny\hbox{$42$},x=0.0106cm,y=2.9757cm]{v41}
\Vertex[style={fill=Blue},L=\tiny\hbox{$43$},x=0.1018cm,y=2.4125cm]{v42}
\Vertex[style={fill=Blue},L=\tiny\hbox{$44$},x=0.4536cm,y=1.5675cm]{v43}
\Vertex[L=\tiny\hbox{$31$},x=3.1893cm,y=3.4494cm]{v44}
\Edge(v0)(v1)
\Edge(v1)(v2)
\Edge(v1)(v3)
\Edge(v1)(v4)
\Edge(v1)(v5)
\Edge(v1)(v6)
\Edge(v1)(v7)
\Edge(v1)(v8)
\Edge(v1)(v26)
\Edge(v1)(v27)
\Edge(v1)(v29)
\Edge(v1)(v44)
\Edge(v2)(v3)
\Edge(v2)(v4)
\Edge(v3)(v6)
\Edge(v4)(v5)
\Edge(v4)(v6)
\Edge(v5)(v6)
\Edge(v7)(v27)
\Edge(v8)(v44)
\Edge(v9)(v44)
\Edge(v10)(v44)
\Edge(v11)(v44)
\Edge(v12)(v44)
\Edge(v13)(v44)
\Edge(v14)(v44)
\Edge(v15)(v44)
\Edge(v16)(v19)
\Edge(v16)(v23)
\Edge(v17)(v23)
\Edge(v18)(v20)
\Edge(v18)(v23)
\Edge(v19)(v23)
\Edge(v20)(v21)
\Edge(v22)(v23)
\Edge(v22)(v34)
\Edge(v23)(v24)
\Edge(v23)(v25)
\Edge(v23)(v31)
\Edge(v23)(v32)
\Edge(v23)(v33)
\Edge(v23)(v34)
\Edge(v23)(v35)
\Edge(v23)(v36)
\Edge(v23)(v38)
\Edge(v24)(v38)
\Edge(v25)(v26)
\Edge(v25)(v27)
\Edge(v25)(v31)
\Edge(v25)(v38)
\Edge(v25)(v44)
\Edge(v26)(v27)
\Edge(v27)(v28)
\Edge(v27)(v29)
\Edge(v27)(v31)
\Edge(v27)(v44)
\Edge(v31)(v38)
\Edge(v32)(v38)
\Edge(v33)(v38)
\Edge(v34)(v38)
\Edge(v35)(v38)
\Edge(v36)(v38)
\Edge(v37)(v38)
\Edge(v38)(v39)
\Edge(v38)(v40)
\Edge(v38)(v41)
\Edge(v38)(v42)
\Edge(v38)(v43)
\end{tikzpicture}
\caption{The Maximal Modular Decomposition of $M$}  
\end{subfigure}
\begin{subfigure}[b]{0.4\textwidth}\centering
\begin{tikzpicture}
\tikzset{VertexStyle/.style = {shape = circle,inner sep=0pt,minimum size = 0pt,draw=black}}
\tikzset{EdgeStyle/.style = {draw=black}}

\Vertex[style={fill=green},L=\tiny\hbox{\textcolor{green}{1}},x=4cm,y=2.5cm]{v0}
\Vertex[L=\tiny\hbox{$2$},x=3.75cm,y=2cm]{v1}
\Vertex[style={fill=pink},L=\tiny\hbox{\textcolor{pink}{1}},x=3cm,y=1cm]{v7}
\Vertex[L=\tiny\hbox{$9$},x=3.7cm,y=2.75cm]{v8}
\Vertex[style={fill=red},L=\tiny\hbox{\textcolor{red}{1}},x=3.8991cm,y=4.2cm]{v9}
\Vertex[style={fill=orange},L=\tiny\hbox{\textcolor{orange}{1}},x=1.2698cm,y=4.9777cm]{v16}
\Vertex[L=\tiny\hbox{$19$},x=2cm,y=4cm]{v18}
\Vertex[L=\tiny\hbox{$21$},x=2.1cm,y=4.5cm]{v20}
\Vertex[L=\tiny\hbox{$22$},x=2.2cm,y=5cm]{v21}
\Vertex[L=\tiny\hbox{$23$},x=0.8612cm,y=4.4362cm]{v22}
\Vertex[L=\tiny\hbox{$24$},x=1.4891cm,y=3.6254cm]{v23}
\Vertex[style={fill=brown},L=\tiny\hbox{\textcolor{brown}{1}},x=0.6cm,y=4cm]{v24}
\Vertex[L=\tiny\hbox{$26$},x=2cm,y=3.5cm]{v25}
\Vertex[L=\tiny\hbox{$27$},x=3.25cm,y=2.75cm]{v26}
\Vertex[L=\tiny\hbox{$28$},x=2.5784cm,y=2cm]{v27}
\Vertex[L=\tiny\hbox{$29$},x=2cm,y=1.75cm]{v28}
\Vertex[L=\tiny\hbox{$32$},x=1.7282cm,y=2.3cm]{v31}
\Vertex[L=\tiny\hbox{$35$},x=0cm,y=3.7932cm]{v34}
\Vertex[style={fill=Blue},L=\tiny\hbox{\textcolor{Blue}{1}},x=0.0cm,y=2.0433cm]{v37}
\Vertex[L=\tiny\hbox{$39$},x=0.947cm,y=2.5723cm]{v38}
\Vertex[L=\tiny\hbox{$45$},x=3.1893cm,y=3.4494cm]{v44}
\Edge(v0)(v1)
\Edge(v1)(v7)
\Edge(v1)(v8)
\Edge(v1)(v26)
\Edge(v1)(v27)
\Edge(v1)(v44)
\Edge(v7)(v27)
\Edge(v8)(v44)
\Edge(v9)(v44)
\Edge(v16)(v23)
\Edge(v18)(v20)
\Edge(v18)(v23)
\Edge(v20)(v21)
\Edge(v22)(v23)
\Edge(v22)(v34)
\Edge(v23)(v24)
\Edge(v23)(v25)
\Edge(v23)(v31)
\Edge(v23)(v34)
\Edge(v23)(v38)
\Edge(v24)(v38)
\Edge(v25)(v26)
\Edge(v25)(v27)
\Edge(v25)(v31)
\Edge(v25)(v38)
\Edge(v25)(v44)
\Edge(v26)(v27)
\Edge(v27)(v28)
\Edge(v27)(v31)
\Edge(v27)(v44)
\Edge(v31)(v38)
\Edge(v34)(v38)
\Edge(v37)(v38)
\end{tikzpicture}
\caption{The Minimal Quotient Graph of $M$}
\end{subfigure}
\caption{}
\label{fig:figure3}
\end{figure}

\section{Cartesian Product Graphs}\label{section:F}
In~\cite{khalifeh2015distance} the behaviour of the distance preserving property is investigated with respect to the Cartesian product of graphs. Recall that the Cartesian product of two graphs $G$ and $H$, denoted $G\ \Box\ H$, has vertex set $V(G)\times V(H)$ and two vertices $(g,h)$ and $(g',h')$ are adjacent precisely if $g=g'$ and $hh'\in E(H)$ or $h=h'$ and $gg'\in E(G)$.

 It was shown in~\cite{khalifeh2015distance} that $G$ and $H$ are sequentially distance preserving if and only if~$G\ \Box\ H$ is sequentially distance preserving. Furthermore, it was conjectured that if $G$ and~$H$ are distance preserving, then so is $G\ \Box\ H$. We prove a result somewhat weaker than the conjecture. To this end we need the following notation and lemma,
$$ \DP'(G):=\bigl\{ A\subseteq V(G)\,\big|\, G-A \le G\bigl\}\ \ \text{  and }\ \ \ddp'(G):=\bigl\{ |A|\,\big|\, A \in \DP'(G)\bigl\}.$$

\begin{lem}\label{lemma:2}~\cite[Lemma 4.3]{khalifeh2015distance}
Given nonempty subsets $A$ and $B$ of the vertex sets of graphs $G$ and $H$, respectively, then 
$A\times B\in \DP'(G\ \Box\  H)$ if and only if $A\in \DP'(G)$ and $B\in \DP'(H)$.
\end{lem}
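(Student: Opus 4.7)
The plan is to exploit the standard distance formula $d_{G\,\Box\,H}((g,h),(g',h')) = d_G(g,g') + d_H(h,h')$ for Cartesian products, together with the elementary fact that $(g,h) \notin A \times B$ if and only if $g \notin A$ or $h \notin B$. I handle the two directions of the biconditional separately; the main task in each direction is translating between geodesics in $G\,\Box\,H$ avoiding $A \times B$ and geodesics in $G$ (resp.\ $H$) avoiding $A$ (resp.\ $B$).

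For the forward direction, assume $A \times B \in \DP'(G\,\Box\,H)$. To show $A \in \DP'(G)$, fix any $g_1, g_2 \in V(G) \setminus A$ and choose some $h \in B$ (nonempty by hypothesis). Then $(g_1,h)$ and $(g_2,h)$ both lie outside $A \times B$, and the isometric hypothesis yields a path of length $d_G(g_1,g_2) + d_H(h,h) = d_G(g_1,g_2)$ between them in $(G\,\Box\,H) - (A\times B)$. A length count forces every edge of this path to change the $G$-coordinate (an edge moving only in $H$ would give a shorter walk from $g_1$ to $g_2$ in $G$), so the path has the form $(p_0,h), (p_1,h), \ldots, (p_k,h)$ where $p_0,\ldots,p_k$ is a $G$-geodesic. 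Since $h \in B$, the condition $(p_i,h) \notin A \times B$ forces $p_i \notin A$, giving the desired geodesic in $G-A$. The statement $B \in \DP'(H)$ follows symmetrically by choosing some $g \in A$.

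For the reverse direction, assume $A \in \DP'(G)$ and $B \in \DP'(H)$, and fix $(g_1,h_1), (g_2,h_2)$ outside $A \times B$. The goal is to build a path of length $d_G(g_1,g_2) + d_H(h_1,h_2)$ avoiding $A \times B$, by concatenating a $G$-geodesic $P$ with an $H$-geodesic $Q$ routed either through $(g_2,h_1)$ ($G$-first, then $H$) or through $(g_1,h_2)$ ($H$-first, then $G$). I split into four cases, selecting $P$, $Q$, and the routing order as follows: (i) if $g_1,g_2 \notin A$, take $P$ inside $G-A$ and route $G$-first; (ii) if $h_1,h_2 \notin B$, take $Q$ inside $H-B$ and route $G$-first; (iii) if $g_1 \notin A$ and $h_2 \notin B$, use arbitrary $P,Q$ routed $H$-first; (iv) if $g_2 \notin A$ and $h_1 \notin B$, use arbitrary $P,Q$ routed $G$-first. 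The subtle step, and the main obstacle, is verifying that these four cases exhaust all possibilities; this follows because $(g_i,h_i) \notin A \times B$ forces $g_i \notin A$ or $h_i \notin B$ for each $i$, yielding a $2\times 2$ table of combinations each falling into at least one case. In each case a direct inspection confirms that every intermediate vertex of the concatenated path has at least one coordinate outside its forbidden set, completing the argument.
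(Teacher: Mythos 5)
Your proof is correct. Note that the paper does not prove this lemma itself---it is imported verbatim as Lemma 4.3 of \cite{khalifeh2015distance}---so there is no in-paper argument to compare against; your argument via the distance formula $d_{G\,\Box\,H}((g,h),(g',h'))=d_G(g,g')+d_H(h,h')$, with the projection/length-count in the forward direction (correctly using the nonemptiness of $B$ and $A$ to pick the fixed coordinate inside the removed set) and the exhaustive four-case concatenation of geodesics in the reverse direction, is the standard one and is complete.
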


Now we have all we need to prove the main result for this section. Which follows a similar argument to that of Theorem 4.4 in \cite{khalifeh2015distance}.
 \begin{prop}\label{proposition:2}
 If $G$ is sequentially distance preserving and $H$ is distance preserving, then $G\,\Box\, H$ is distance preserving.
\end{prop}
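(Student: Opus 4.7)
The plan is to show that for every integer $k$ with $1\le k\le |G|\cdot|H|$, the product $G\,\Box\,H$ has an isometric subgraph of order $k$. Let $n=|G|$, $m=|H|$, and fix an sdp ordering $v_1,\ldots,v_n$ of $G$, so that $G_i:=G[\{v_{n-i+1},\ldots,v_n\}]$ is isometric in $G$ for every $1\le i\le n$. Since both $G_i\le G$ and $G_{i+1}\le G$ with $V(G_i)\subseteq V(G_{i+1})$, the subgraph $G_i$ is isometric in $G_{i+1}$, so $\{v_{n-i}\}\in\DP'(G_{i+1})$ for $1\le i\le n-1$. Write $k=im+r$ with $0\le i\le n$ and $0\le r<m$ by Euclidean division; the idea is to build the desired isometric subgraph as $i$ full ``$H$-slices'' followed by a partial slice of order $r$, where the slices are selected by the sdp ordering of $G$ and the partial slice is supplied by the dp property of $H$.

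In the main case $1\le i\le n-1$ and $1\le r\le m-1$, pick $H'\le H$ with $|H'|=r$, which exists because $H$ is dp. Then $V(H)\setminus V(H')\in\DP'(H)$ and $\{v_{n-i}\}\in\DP'(G_{i+1})$, so applying Lemma~\ref{lemma:2} to the factors $G_{i+1}$ and $H$ gives
\[
\{v_{n-i}\}\times\bigl(V(H)\setminus V(H')\bigr)\in\DP'(G_{i+1}\,\Box\,H).
\]
Thus $S:=(G_{i+1}\,\Box\,H)-\{v_{n-i}\}\times(V(H)\setminus V(H'))$ is isometric in $G_{i+1}\,\Box\,H$, and $|S|=(i+1)m-(m-r)=k$. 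Since $G_{i+1}\le G$, the standard Cartesian distance formula $d_{G\,\Box\,H}((g,h),(g',h'))=d_G(g,g')+d_H(h,h')$ gives $G_{i+1}\,\Box\,H\le G\,\Box\,H$, so $S\le G\,\Box\,H$ by transitivity of $\le$.

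For the boundary cases, if $r=0$ and $1\le i\le n$ take $S=G_i\,\Box\,H$, which is isometric in $G\,\Box\,H$ by the Cartesian distance formula together with $G_i\le G$; this also handles $k=nm$ when $i=n$. If $i=0$ then $k=r\in\{1,\ldots,m-1\}$, and we take $S=\{v_n\}\times V(H')$ for some $H'\le H$ with $|H'|=r$. Then $S$ is isomorphic to $H'$ and is isometric in the single slice $\{v_n\}\,\Box\,H\cong H$, which itself is isometric in $G\,\Box\,H$, so $S\le G\,\Box\,H$. Together the three cases exhaust $\{1,\ldots,nm\}$.

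The main obstacle is that Lemma~\ref{lemma:2} cannot be applied when one of the factors of the product subset equals the entire vertex set, since the corresponding $\DP'$-containment would require an empty isometric subgraph, which is excluded by the standing nonempty-graph convention. This is precisely the issue in the boundary cases $r=0$ (where the $H$-factor would be all of $V(H)$) and $i=0$ (where the $G$-factor would be all of $V(G)$); these are handled by appealing to the Cartesian distance formula directly, which yields $G'\,\Box\,H\le G\,\Box\,H$ whenever $G'\le G$ without passing through Lemma~\ref{lemma:2}. The remaining routine checks are the Euclidean-division bookkeeping and verifying that the sdp ordering of $G$ restricts to an sdp ordering of each $G_i$, which follows from transitivity of the isometric relation.
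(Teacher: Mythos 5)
Your proof is correct and is essentially the paper's own argument: both peel off $H$-slices of $G\,\Box\,H$ according to the sdp ordering of $G$, complete the count with a partial slice supplied by the dp property of $H$, and combine Lemma~\ref{lemma:2} with transitivity of the isometric relation. You merely describe the retained vertex set (via Euclidean division $k=im+r$) where the paper describes the deleted set inductively, and you spell out the boundary cases $r=0$ and $i=0$ that the paper's ``applying this argument inductively'' glosses over.
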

\begin{proof}
Since $G$ is sdp there is an ordering $v_1,\ldots,v_{|G|}$ of~$V(G)$ such that $\{v_i\}_{i=1}^s\in\DP'(G)$, for every $1\leq s\leq |G|$.
Moreover, since $H$ is distance preserving there is a set $A_j\in\DP'(H)$ with $|A_j|=j$, for every $1\leq j\leq |H|$. By Lemma \ref{lemma:2} we know that ${v_1}\times A_j\in \DP'(G\ \Box\  H)$, for all $1\le j\le |H|$. Furthermore, Lemma \ref{lemma:2} implies that ${v_2}\times A_j\in \DP'((G-\{v\})\ \Box\  H)$, for all $1\le j\le |H|$, so by the transitivity of the isometric property we get $(v_1\times H)\cup(v_2\times A_j)\in \DP'(G\ \Box\  H)$. Applying this argument inductively we get $(\{v_i\}_{i=1}^{s-1}\times H)\cup(v_s\times A_j)\in \DP'(G\ \Box\  H)$, for all $1\leq s\leq |G|$ and $1\leq j\leq |H|$. Therefore, $[0,|G|\times |H|]\subseteq \ddp'(G\ \Box\  H)$, so $G\ \Box\  H$ is dp.
\end{proof}

\end{document}